\theoremstyle{plain}  
\newtheorem{thm}{Theorem}[section]
\newtheorem{lem}[thm]{Lemma}
\newtheorem{cor}[thm]{Corollary}
\newtheorem{conjecture}[thm]{Conjecture}
\theoremstyle{definition}  
\newtheorem{defn}[thm]{Definition}
\newtheorem{prob}[thm]{Problem}
\theoremstyle{remark}  
\newtheorem{rem}[thm]{Remark}
\newcommand{\N}{{\mathbb{N}}}
\definecolor{gray85}{gray}{0.85} 
\definecolor{gray8}{gray}{0.8} 
\definecolor{gray7}{gray}{0.7} 
\definecolor{gray6}{gray}{0.6} 
\definecolor{gray5}{gray}{0.5} 
\definecolor{gray4}{gray}{0.4} 
\definecolor{gray35}{gray}{0.35} 
\newcommand\blfootnote[1]{%
  \begingroup
  \renewcommand\thefootnote{}\footnote{#1}%
  \addtocounter{footnote}{-1}%
  \endgroup
}
\newcommand{\abs}[1]{\left| #1 \right|} 
\definecolor{forestgreen}{rgb}{0.13, 0.55, 0.13}
\definecolor{darkgray}{rgb}{.55,.55,.55}
\definecolor{lightgray}{rgb}{.8,.8,.8}
\definecolor{verylightgray}{rgb}{1,1,1}
\newtheorem{question}[]{Question}
\newtheorem{mainresult}{Main Result}
\title{Genetic Networks Encode Secrets of Their Past}
\author{Peter Crawford-Kahrl\textsuperscript{1,\textdagger}, Robert R. Nerem\textsuperscript{2,\textdagger}, Bree Cummins\textsuperscript{1}, and Tomas Gedeon\textsuperscript{1}}
\date{}
\begin{document}

\twocolumn[
  \begin{@twocolumnfalse}
    \maketitle
    \renewcommand{\abstractname}{Significance Statement}
    \begin{abstract}
The study of gene regulatory networks has expanded in recent years as an abundance of experimentally derived networks have become publicly available. The sequence of evolutionary steps that produced these networks are usually unknown.  As a result, it is challenging to differentiate  features that arose through gene duplication and gene interaction removal from features introduced through other mechanisms. We develop tools to  distinguish these network features and in doing so, give methods for studying ancestral networks through the analysis of present-day networks. 
    \end{abstract}
    
    \renewcommand{\abstractname}{Abstract}
    \begin{abstract}
    Research shows that gene duplication followed by either repurposing or removal of duplicated genes is an important contributor to evolution of gene and protein interaction networks. We aim to identify which characteristics of a  network can arise through this process, and which must have been produced in a different way.  To model the network evolution, we postulate vertex duplication and edge deletion as evolutionary operations on graphs. Using the novel concept of an \emph{ancestrally distinguished subgraph}, we show how features of present-day networks require certain features of their ancestors. In particular,  ancestrally distinguished subgraphs cannot be introduced by vertex duplication.  Additionally, if vertex duplication and edge deletion  are   the only evolutionary mechanisms, then a graph's ancestrally distinguished subgraphs must be contained in  all of the graph's ancestors.
    We analyze two experimentally derived genetic networks and show that our results accurately predict lack of large ancestrally distinguished subgraphs, despite this feature being statistically improbable in  associated  random networks. This observation is consistent with the hypothesis that these networks evolved primarily via vertex duplication.  The tools we provide open the door for analysing ancestral networks using  current networks. Our results apply to edge-labeled (e.g. signed) graphs which are either undirected or directed.
      \newline
      \newline
    \end{abstract}

  \end{@twocolumnfalse}
]

\section{Introduction}

\blfootnote{\textsuperscript{1}Department of Mathematical Sciences,\\
Montana State University,
Bozeman, Montana, USA\\
\indent\indent\textsuperscript{2}Institute for Quantum Science and Technology,\\ University of Calgary, Alberta T2N 1N4, Canada \\
\indent\indent\textsuperscript{\textdagger}These authors contributed equally to this work}
Gene duplication is one of the most important mechanisms governing genetic network growth and evolution~\cite{  li1997molecular, ohno2013evolution, patthy2009protein}. Another important process is the elimination of interactions between existing genes, and even entire genes themselves. These two mechanisms are often linked, whereby a duplication event is followed by the removal of some of the interactions between the new gene and existing genes in the network~\cite{conant2003asymmetric, dokholyan2002expanding, Janwa2019, taylor2004duplication, vazquez2003modeling, wolfe}. De novo establishment of new interactions or addition of new genes into the network by horizontal gene transfer is also possible, but significantly less likely~\cite{Wagner03}. 

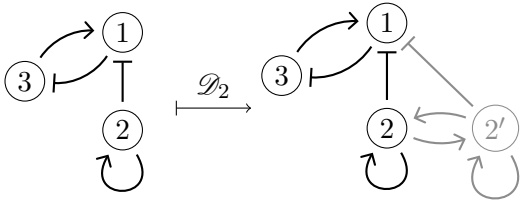
\begin{figure}[t]
\begin{center}
\begin{tikzpicture}
 
    \node[] (1) at (0,0) 
    {
    \begin{tikzpicture}[main node/.style={circle,draw,font=\sffamily\normalsize\bfseries}, inner sep=2pt, minimum size=15pt, scale=1.3]
		
		\node[main node] (c) at (0,0) {$2$};
		\node[main node] (a) at (0,1) { $1$};
		\node[main node] (b) at (-1,.5) { $3$};
	
		\path[->,thick,shorten >= 3pt,shorten <= 3pt,>=angle 90]
		(a) edge[bend left,-|] (b)
		(b) edge[bend left] (a)
		(c) edge[-|] (a)
		(c) edge[in=-120,out=-60,loop, looseness=8] (c)
		;
		\end{tikzpicture}	
		};
    \node[] (2) [right=of 1] 
    {
    \begin{tikzpicture}[main node/.style={circle,draw,font=\sffamily\normalsize\bfseries}, inner sep=2pt, minimum size=15pt, scale=1.4]
		
		\node[main node] (c) at (0,0) { $2$};
		\node[main node] (a) at (0,1) { $1$};
		\node[main node] (b) at (-1,.5) { $3$};
		\node[main node, draw=darkgray] (c') at (1,0) {\color{darkgray}  $2'$};
		
		\path[->,thick,shorten >= 3pt,shorten <= 3pt,>=angle 90]
		(a) edge[bend left,-|] (b)
		(b) edge[bend left] (a)
		(c) edge[-|] (a)
		(c) edge[in=-120,out=-60,loop, looseness=8] (c)
		(c) edge[bend right=20,darkgray] (c')
		(c') edge[bend right=20,darkgray] (c)
		(c') edge[in=-120, out=-60, loop, looseness=8,darkgray] (c')
		(c') edge[-|,darkgray] (a)
		;
		\end{tikzpicture}
		};
        
		\path[|->] (1) edge [] node[above]  {$\mathscr{D}_2$} (2);
\end{tikzpicture}
\end{center}
\caption{An illustration of vertex duplication. The left graph is $G$, and the right graph is $G'=\mathscr{D}_2(G)$. Here, vertex 2 is duplicated, resulting in the addition of vertex $2'$ and new edges, all of which are shown in grey. Vertex $2'$ inherits all of the connections of vertex 2. Since $2$ possesses a self-loop, we add connections between $2$ and $2'$.}
\label{fig:nodeDuplication}
\end{figure}

A common description of protein-protein interaction networks and genetic regulatory networks is that of a graph. Several papers study how gene duplication, edge removal  and vertex removal affect the global structure of the interaction network from a graph theoretic perspective~\cite{vazquez03,dorog01,sole02,Wagner01,Wagner03}. They study the effects that the probability of duplication and removal have on various network characteristics, such as the degree distribution of the network. 
These papers conclude that by selecting proper probability rates of vertex doubling, deletion of newly created edges after vertex doubling, and addition of new edges, one can recover the degree distribution observed in inferred genetic networks in the large graph limit. This seems to be consistent with the data from \textit{Saccharomyces cerevisiae}~\cite{Wagner01,Wagner03} but since regulatory networks are finite, the distributions of genetic networks are by necessity only approximations to the theoretical power distributions.

Other investigations are concerned with general statistical  descriptors of large networks. These descriptors include the distribution of path lengths, number of cyclic paths, and other graph characteristics ~\cite{albert02,Barabasi99,Jeong01,watts99}. These methods are generally applicable to any type of network (social interactions, online connections, etc) and are often used to compare networks across different scientific domains.

We take a novel approach to analyzing biological network evolution. We pose the following question: 

\begin{question}\label{question}
Given a current network, with no knowledge of its  evolutionary path, can one recover structural traces of its ancestral network?
\end{question}


To answer this question we formulate a general model of graph evolution, with two operations: the duplication of a vertex  and removal of existing vertices or edges. The effect of vertex duplication, shown in Figure ~\ref{fig:nodeDuplication}, is defined by a vertex and its duplicate sharing  the same adjacencies. This model does not put any constraints on which vertices or edges may be removed, the order of evolutionary operations, nor limits the number of operations of either type. Previous investigations of the evolution of networks under vertex duplication study special cases of our model~\cite{conant2003asymmetric, dokholyan2002expanding,taylor2004duplication, vazquez2003modeling}. 

Suppose that a particular sequence of evolutionary operations transforms a graph $G$ into a graph $G'$. We seek to discover which characteristics and features of the ancestor $G$ may be recovered from knowledge of $G'$. Although this work is motivated by biological applications, the results in our paper apply to any edge-labeled directed or undirected graph.

 Our results are in two related directions. First, we introduce the concept of a ancestrally distinguished subgraph and show that $G$ must contain all (ancestrally) distinguished subgraphs of $G'$. This implies that vertex duplication and edge deletion can not introduce distinguished subgraphs. Next, we define the distinguishability of graph as the size of of its largest distinguished subgraph. Our theoretical analysis suggests that small distinguishability is a signature of networks that evolve primarily via vertex duplication. We confirm this result by showing that the distinguishabilities of two published biological networks and artificial networks evolved by simulated vertex duplication both exhibit  distinguishability that is smaller than their expected distinguishability under random edge relabeling.

\section{Main Results}
\label{sec:Results}
\subsection{Ancestral Networks Contain Distinguished Subgraphs}

We begin by introducing a new graph property that we call \emph{ancestral distinguishability} (Definition \ref{def:distinguishability}) shortened to distinguishability hereafter. We say two vertices are distinguishable if there exists a mutual neighbor for which the edges connecting the vertices to this neighbor have different edge labels. In a directed graph, a mutual neighbor is either a predecessor of both vertices or a successor of both vertices. Since, by definition of duplication, a vertex and its duplicate must be connected to each of their neighbors by edges with the same label  (Figure \ref{fig:nodeDuplication}, Definition \ref{defn:nodeDouble}), we show that a vertex and its duplicate can never be  distinguishable. Additionally, deletion of edges can not create distinguishability between two vertices. 

We combine these results to prove that vertex duplication and edge deletion cannot  create  new subgraphs for which every pair of vertices is distinguishable. This observation yields our first main result that any such \emph{distinguished subgraph} in the current network $G'$, must have also occurred in the ancestral network $G$ (Corollary \ref{cor:primordialSubgraph}). In fact this result is a corollary of a stronger theorem regarding the existence of a certain graph homomorphism from $G'$ to $G$ (Theorem~\ref{thm:lookBackMap}).

\begin{mainresult} \label{mr:1}
If $G'$ is a network formed from $G$ by vertex duplication and edge deletion, then all distinguished subgraphs of $G'$ are isomorphic to distinguished subgraphs of $G$.  In other words, no distinguished subgraph in $G$ could have been introduced by vertex duplication and edge deletion.
\end{mainresult}


We develop Main Result \ref{mr:1} in the setting for which vertex duplication and edge deletion are the only evolutionary mechanisms. However, if there are evolutionary mechanisms other than vertex duplication and edge deletion, the the second formulation of Main Result 1 offers an important insight. If a sequence of arbitrary evolutionary steps (vertex duplication, edge deletion, or some other mechanism) takes a network $G$ to a network $G'$ containing a  distinguished subgraph $H$, then either $H$ is isomorphic to a subgraph of $G$ or
at least one step in the evolutionary sequence was not vertex duplication or edge deletion.

\subsection{A Robust Signature of Duplication}

\begin{figure}
    \centering
    \includegraphics[width=\linewidth]{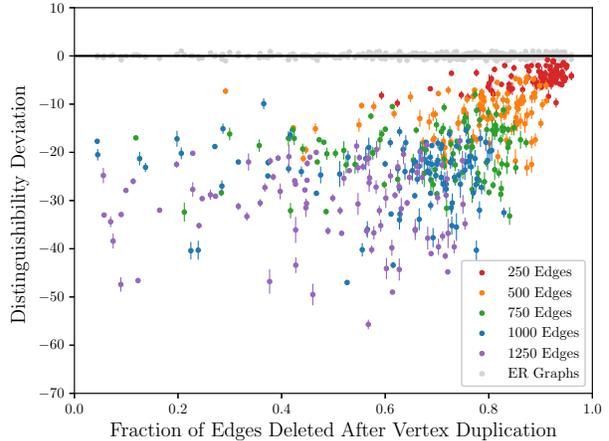}
    \caption{ Colored points represent 500 directed graphs generated  from random 25-vertex seed graphs by repeated random vertex duplication and subsequent edge deletion until a predetermined number of edges is achieved. Color indicates final number of edges after deletion. Each of the 500 grey points  represents a randomly generated  ER-graph with number of vertices, positive edges, and negative edges equal to that of a corresponding evolved graph. The corresponding figure for undirected graphs is Figure \ref{fig:und fig1} in the SI. }
    \label{fig:sfig1}
\end{figure}

We next aim to determine if the effects of evolution by vertex duplication and edge deletion can be identified in biological networks. We consider the \emph{distinguishability} of a graph, which is the number of vertices in its largest distinguished subgraph. Since vertex duplication and edge deletion cannot create distinguishability, the distinguishability of a graph cannot increase under this model of evolution (Corollary \ref{cor:disNumberConserved}). Since observations indicate that evolution is dominated by duplication and removal, we predict that genetic networks exhibit low distinguishability. 

To quantify the degree to which the distinguishability of a graph $G$ is low, we compute the \emph{distinguishability deviation} of $G$: the difference between the distinguishability of $G$ and the expected distinguishability of $G$ under random edge relabeling (Equation \ref{eq:rel dist}).
 Since low distinguishability is a signature of vertex duplication, we expect random relabeling to remove this signature and therefore increase distinguishability. In other words, we expect networks evolved by vertex duplication and edge deletion to have negative distinguishability deviation. 

We calculate the distinguishability deviation of networks constructed by simulated evolution via vertex duplication and edge deletion.
 These networks are formed in two stages from  25-vertex Erd{\"o}s-R{\'e}nyi graphs (ER-graphs \cite{ER}) with two edge labels denoting positive and negative interaction.  First, vertex duplication is applied 225 times, each time to a random vertex. Next, edges are randomly deleted until some target final number of edges is reached. The deletions simulate both evolutionary steps and the effect of incomplete data in experimentally derived networks.  We note that the operation of vertex duplication and edge removal commute in a sense that any graph that can be built by an arbitrary order of these operations can be also built by performing the duplications first and then performing an appropriate number of deletions. Therefore our construction is general. 

As shown in Figure \ref{fig:sfig1}, these simulations indicate that networks evolved by vertex duplication have negative distinguishability deviation. For each graph represented by a colored point in Figure \ref{fig:sfig1}, we construct an ER-graph with the same number of vertices, positive edges, and negative edges. These graphs are represented by grey points and show that ER-graphs exhibit near-zero distinguishability deviation. This negativity is robust against edge deletion; even graphs that had 80\% of their edges deleted after vertex duplication exhibited statistically significant negative distinguishability deviation.  

Having established evidence that graphs evolved by vertex duplication exhibit negative distinguishability deviation, we  evaluate if this property is observable in biological networks. 
We consider two networks. The first is a \textit{D. melanogaster} protein-protein interaction network developed by~\cite{vin14}, represented by an edge-labeled undirected graph. Second, we investigate the directed human blood cell regulatory network recorded in \cite{Collombet2017}. Both networks have label set $L=\{-1,+1\}$,  signifying negative and positive  regulation, respectively. 

The  distinguishability deviations of these networks confirm our predictions. Respectively, the distinguishabilities of the \textit{D. melanogaster} and blood cell networks are 7 and 4 and their expected distinguishabilities approximated by 100 random edge sign relabeling  are  $ 31.2 \pm .7 \; \mbox{ and } \;5.6 \pm .6 $. Thus, these networks have distinguishability deviations of
\begin{equation}\label{eq:RelativeDistRealNet}
 -24.2 \pm .7 \; \mbox{ and } \;-1.6 \pm .6 
\end{equation}
with  statistical significance of $34.6$ and $2.3$ standard deviations, respectively. 
%
These results are consistent with the hypothesis that biological networks inferred from experimental data are subject to long sequences of vertex duplication and edge removal without the evolutionary operation of novel vertex or edge addition.


 
The joint evidence of negative distinguishability deviations in both simulated and observed data leads to the following result.

\begin{mainresult}
Negative distinguishability deviation is a likely signature of evolution via vertex duplication and edge deletion.
\end{mainresult}

While we do not offer a rigorous mathematical proof, in Subsection \ref{sec:resultsComputation} we give evidence for a conjecture (Conjecture~\ref{theMainConjecture}) which, if true, would prove that vertex duplication   always  decreases distinguishability deviation. 
SI Section~\ref{app:numerical} gives a detailed description of the simulated evolution scheme we used in Figure \ref{fig:sfig1}. For completeness, we show in this section that negative distinguishability deviation cannot be fully explained by the single vertex characteristics (i.e. signed degree sequence) or small world properties of the networks.

\section{Discussion}

We introduce the concept of distinguished subgraphs, in which every vertex has differentiating regulatory interactions from every other vertex in the subgraph. We show that distinguished subgraphs cannot be created by vertex duplication and edge deletion. Remarkably, this implies that any of a network's distinguished subgraphs must appear in all of its ancestors under a model of network evolution that allows duplication and removal, but does not allow for the addition of new vertices or edges. Furthermore, this result shows that distinguished subgraphs cannot be introduced by vertex duplication and edge deletion. 

In biological networks the addition of regulatory interactions between existing genes (neofunctionalization \cite{Force1999}), or the addition of entirely new genes via horizontal gene transfer~\cite{Wagner03} are possible, but are considered less likely than gene duplication or loss of function of a regulatory interaction \cite{Bergthorsson2007}. With this in mind, we consider a model of network evolution in which long sequences of vertex duplication and edge removal are interspersed by infrequent additions of new edges or vertices. Under this model, Main Result~\ref{mr:1} (Corollary~\ref{cor:primordialSubgraph}) applies to any sequence of consecutive vertex duplications and edge removals.

 
We investigate whether the predicted features of vertex duplication can be found in biological networks inferred from experimental observations. Using the metric of distinguishability deviation we show that two inferred biological networks and a population of simulated networks evolved by vertex duplication exhibit negative distinguishability deviation that is statistically improbable in associated random networks. We propose that negative distinguishability deviation is a marker of evolution by vertex duplication and edge removal. 

One potential application of this result is a method of checking the suitability of random graph models. Often, random statistical models are developed to generate graphs that match properties of social networks \cite{newman2002random}, properties of biological networks \cite{saul2007exploring}, or general graph theoretic properties \cite{fosdick2018configuring}. For example, the discovery of small-world phenomena~\cite{Milgram1967,watts99} lead to the development of the Watts-Strogatz model~\cite{Watts1998}.  Our results imply that an accurate random graph model for signed biological networks, or more generally edge-labeled networks that primarily evolved via vertex duplication, should generate networks with negative distinguishability deviation. Additionally, distinguishability deviation could inform the development of new models that more closely agree with experimentally derived networks.

As an illustration of the utility of Main Result \ref{mr:1}, we consider the following example. Certain network motifs, i.e.\ 3-4 vertex subgraphs, have been shown to appear at statistically higher rates in inferred biological networks~\cite{milo2002network}. Motifs seem to be a byproduct of convergent evolution, being repeatedly selected for based on their underlying biological function, and appearing in organisms and systems across various biological applications~\cite{alon2007network}. 

Vertex duplication and edge removal can easily create new motifs. For example, consider the feed-forward loop, any three vertex subgraph isomorphic to a directed graph with edge set $\{ (i,j), (j,k), (i,k) \}$ (see~\cite{shen2002network}). In Figure~\ref{fig:nodeDuplication}, no feed-forward loops can be found in $G$, but there are two in $G'$, both of which contain the vertices $1$, $2$, and $2'$. In contrast, the introduction of motifs that are also distinguished subgraphs by vertex duplication and edge deletion is forbidden by Main Result~\ref{mr:1}. Indeed,  the feed-forward loops created in Figure~\ref{fig:nodeDuplication}  are not distinguished subgraphs.
This ability to identify which motifs could not have arisen from vertex duplication and edge deletion could provide new insight into the origin of specific motifs and, potentially, their biological importance.
Similarly, identifying genes in subgraphs that cannot arise from vertex duplication and edge deletion could be useful for finding genes that were introduced by mechanisms outside of these operations, such as horizontal gene transfer.

Finally, our mathematical results are general enough to survey network models beyond genetics to discern if vertex duplication may have played a role in their evolution.
For example, current ecological networks reflect past speciation events, where a new species  initially shares the ecological interactions of their predecessors. This can be viewed as vertex duplication and therefore ecological networks may exhibit significant negative distinguishability deviation. Evaluating the distinguishability deviation of ecological networks could indicate if the duplication process has been a significant factor in their evolution. More broadly, the study of the evolutionary processes that produce networks has been used to understand why networks from distinct domains, be they social, biological, genetic, internet connections, etc, have properties unique to their domain (e.g. exponents of power law distributions  \cite{Graham2003}).  
 Distinguishability deviation is yet another tool to understand the effect evolutionary processes have on networks.



\section{Methods}

We proceed with preliminary definitions to familiarize the reader with the language and notation used in this paper.

\subsection{Definitions}

Throughout this paper we fix an \emph{edge label set} $L$. We assume that $\abs{L}\geq 2$, otherwise the results are trivial. For example, to consider signed regulatory networks with both activating and inhibiting interactions one could take $L =\{+1,-1\}$. We use this choice in examples, along with the notation $\dashv$  and $\to$ to represent directed edges with labels $-1$ and $+1$ respectively. 

\begin{defn}
A \emph{graph} is the 3-tuple $G := (V,E,\ell)$ where $V$ is a set of vertices, $E \subseteq \{(i,j) : i,j\in V\}$ is a set of directed edges, and $\ell: E \to L$ is a map labeling edges with elements of $L$. 
%
\end{defn}

\noindent Our results apply to both directed graphs and undirected graphs. To facilitate this, we use graph to mean either an undirected or directed graph, and view undirected graphs as a special case of directed graphs, as seen in the following definition.

\begin{defn} \label{defn:undirectedGraph}
A graph $G = (V,E,\ell)$ is \emph{undirected} if $(i,j) \in E$ and $\ell(i,j)= a$ if and only if $(j,i) \in E$ and $\ell(j,i)=a$. For an unlabeled graph, $\ell = \emptyset$.
\end{defn}

\begin{defn}
A \emph{subgraph} of a graph $G = (V,E,\ell)$ is a graph $H = (V',E',\ell|_{E'})$ such that $V' \subseteq V$ and $E' \subseteq E \cap V'\times V'$. If $H$ is undirected, we require that $G$ is also undirected, i.e.\ $E'$ satisfies $(i,j)\in E$ if and only if $(j,i)\in E$.
\end{defn}

\begin{defn}
Let $(V,E,\ell)$ be a graph. We say $j \in V$ is a \emph{neighbor} of $i \in V$ if either $(j,i) \in E$ or $(i,j) \in V$.
\end{defn} 

\begin{defn}
Let $G'=(V',E',\ell')$ and $G= (V,E,\ell)$ be two graphs. A map $\Phi\colon V' \to V$ is a \emph{graph homomorphism (from $G'$ to $G$)} if $\forall i,j \in V'$, if $(i,j) \in E'$, then  $(\Phi(i),\Phi(j)) \in E$ and $\ell'(i,j) = \ell(\Phi(i),\Phi(j))$. In other words, a graph homomorphism is a map on vertices that respects edges and edge labels.
\end{defn}

The following definition specifies an operation on a graph which duplicates a vertex $d$, producing a new graph that is identical in all respects except for the addition of one new vertex, $d'$, that copies the edge connections of $d$. This definition captures the behavior of gene duplication in genetic networks.

\begin{defn}\label{defn:nodeDouble}
Given a graph $G= (V,E,\ell)$ and a vertex $d\in V$, we define the \emph{vertex duplication of $d$} as the graph operation which constructs a new graph, denoted ${\mathscr{D}_d(G) := G' = (V',E',\ell')}$, where $V' := V \cup \{ d'\}$, and $(i,j) \in E'$ with $\ell'(i,j) = a$ if and only if either
\begin{enumerate}
    \item $(i,j) \in E$ with $\ell(i,j) = a$,
    \item $j = d'$ and $(i,d) \in E$ with $\ell(i,d) = a$,
    \item $i = d'$ and $(d,j) \in E$ with $\ell(d,j) = a$,
    \item or $j=i=d'$ and $(d,d) \in E$ with $\ell(d,d) = a$. 
\end{enumerate}
\end{defn}


An example of vertex duplication is shown in Figure~\ref{fig:nodeDuplication}.

\subsection{Distinguishability}
We now introduce an important invariant property under vertex duplication and edge removal. 
\begin{defn} \label{def:distinguishability}
 Let $G = (V,E,\ell)$ be a graph. Two vertices $i,j \in V$ are \emph{distinguishable (in $G$)} if and only if there exists a vertex $k$ that is a neighbor of both $i$ and $j$ such that either 
\begin{equation} \label{eq:dist1}
(i,k),(j,k) \in E \text{ and } \ell(i,k) \neq \ell(j,k)
\end{equation}
or
\begin{equation} \label{eq:dist2}
(k,i),(k,j) \in E \text{ and } \ell(k,i) \neq \ell(k,j).
\end{equation}
We say that $k$ is a \emph{distinguisher} of $i$ and $j$. It is worth noting that there may be multiple distinguishers of $i$ and $j$, i.e.\ distinguishers need not be unique. Furthermore, if $G$ is undirected, Equation~\eqref{eq:dist1} holds for a vertex $k$ if and only if Equation~\eqref{eq:dist2} also holds.

We say $U \subseteq V$ is a \emph{distinguishable set (in G)} if for all $i,j \in U$ with $i\neq j$, the vertices $i$ and $j$ are distinguishable. Similarly, we refer to any subgraph whose vertex set is distinguishable as a \emph{distinguished subgraph}. 
\end{defn}

\begin{rem}
As long as $\abs{L}\geq 2$, for any graph $G$, there is a graph $G'$ that contains $G$ as a distinguished subgraph. To see this,  consider a subgraph $G$. Then for each pair $i,j\in G$ add a new vertex $k$ and edges $\{(i,k),(j,k)\}$ with different labels, so that $\ell(i,k) \neq \ell(j,k)$. Then $i$ and $j$ are distinguishable and $G$ is embedded as a distinguishable subgraph in a larger graph $G'$.
\end{rem}

To illustrate the concept of distinguishable sets, consider the two graphs shown in Figure~\ref{fig:nodeDuplication}. The leftmost graph has  distinguishable sets $\{1,2\}$ and $\{2,3\}$. Here, $2$ is a distinguisher of $1$ and $2$, and $1$ is a distinguisher of $2$ and $3$. However, in the rightmost graph, $2$ and $2'$ are not distinguishable. Any mutual neighbor of $2$ and $2'$ shares exactly the same edges with matching labels.
The last insight, that the duplication of a gene $d$ produces an indistinguishable pair $d$ and $d'$, is general and leads to our main result in Theorem~\ref{thm:lookBackMap}.

\subsection{Distinguished Subgraphs}\label{sec:resultsAncestor}

Fix two graphs $G$ and $G'$. Suppose that $G$ is  an \emph{ancestor} of $G'$, that is, there exists a sequence of graphs $G_1,\dots, G_M$ with $G_m:=(V_m,E_m,\ell_m)$, such that $G=G_1$, $G'=G_M$, and for each $m \in \{1,\dots,M\}$, either $G_{m+1}$ is a subgraph of $G_m$, or $G_{m+1} = \mathscr{D}_{d_m}(G_m)$, for some $d_m \in V_m$. 

To address Question~\ref{question}, we present Theorem~\ref{thm:lookBackMap}. It states that whenever $G$ is an ancestor of $G'$, then there must exist a graph homomorphism from $G'$ to its ancestor $G$ such that the homomorphism is injective on distinguishable sets of vertices. This result allows us to conclude several corollaries that characterize the properties of the ancestor network.

The proof of the following theorem makes use of  Lemma~\ref{lem:doubleGoodMap} in Appendix~\ref{section:technicalLemmas}.

\begin{thm}\label{thm:lookBackMap}
Let $G= (V,E,\ell)$ be an ancestor of $G'=(V',E',\ell')$. Then there is a graph homomorphism $\Phi\colon V' \to V$ such that for all distinguishable sets $ U \subseteq V'$, the restriction $\Phi|_U$ is 1-to-1, and $\Phi(U)$ is a distinguishable set in $G$.
\end{thm}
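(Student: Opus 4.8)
The plan is to induct on the length $M$ of the evolutionary sequence $G = G_1, \dots, G_M = G'$ witnessing that $G$ is an ancestor of $G'$. Since a composition of graph homomorphisms is a graph homomorphism, and injectivity properties on distinguishable sets should be preserved under composition, it suffices to handle a single evolutionary step: I would build a homomorphism $\psi\colon V_{m+1} \to V_m$ with the stated properties for each step, and then compose these maps (in the appropriate order) to obtain $\Phi\colon V' \to V$. The base case $M=1$ is trivial, taking $\Phi = \id$.

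For the inductive step there are two cases, corresponding to the two allowed operations. First I would treat the easy case: if $G_{m+1}$ is a subgraph of $G_m$, then the inclusion map $V_{m+1} \hookrightarrow V_m$ is a graph homomorphism. Here the crucial observation is that edge deletion cannot \emph{create} distinguishability: if $i,j$ are distinguishable in the subgraph $G_{m+1}$ via some distinguisher $k$, then the relevant edges and labels witnessing Equation~\eqref{eq:dist1} or~\eqref{eq:dist2} already live in $G_m$, so $i,j$ remain distinguishable in $G_m$. Thus any distinguishable set $U$ in $G_{m+1}$ maps (via inclusion) to a distinguishable set in $G_m$, and the map is trivially 1-to-1. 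The second case is duplication, $G_{m+1} = \mathscr{D}_{d_m}(G_m)$, where I would define $\psi$ to collapse the duplicate $d_m'$ back onto $d_m$ and fix everything else, i.e.\ $\psi(d_m') = d_m$ and $\psi(v) = v$ otherwise. One checks directly from Definition~\ref{defn:nodeDouble} that $\psi$ respects edges and labels, so it is a graph homomorphism.

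The crux of the duplication case — and the main obstacle — is showing that $\psi$ is 1-to-1 on distinguishable sets and sends them to distinguishable sets. The only way $\psi$ fails injectivity is if a distinguishable set $U$ contains both $d_m$ and $d_m'$; so the heart of the argument is the claim that $d_m$ and $d_m'$ are \emph{never} distinguishable in $\mathscr{D}_{d_m}(G_m)$. This is exactly the observation flagged in the paper after Figure~\ref{fig:nodeDuplication}: by the definition of duplication, $d_m$ and $d_m'$ share precisely the same neighbors with precisely the same edge labels, so no common neighbor $k$ can satisfy $\ell(d_m,k)\neq \ell(d_m',k)$ (or the in-edge variant), with the one subtlety being the self-loop case, where $d_m'$ also connects to $d_m$ — this must be checked to confirm it still does not produce a distinguisher. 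I expect this injectivity-on-distinguishable-sets claim, together with verifying that the image $\psi(U)$ is again distinguishable (mapping each distinguisher forward and confirming the label inequalities survive), to be the technically delicate part; this is presumably the content of the cited Lemma~\ref{lem:doubleGoodMap}, which I would invoke to package the duplication step cleanly.

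Finally, I would assemble the pieces: composing the step maps $\psi_m$ gives $\Phi$, and I would verify that the two properties chain correctly. For injectivity, if $\Phi$ were not 1-to-1 on some distinguishable set $U\subseteq V'$, tracing back through the compositions and using that each $\psi_m$ is 1-to-1 on distinguishable sets (and carries distinguishable sets to distinguishable sets) would yield a contradiction; symmetrically, the forward property $\Phi(U)$ distinguishable follows by iterating the single-step forward property. The one point requiring care in the composition is that the \emph{image} of a distinguishable set must remain distinguishable at each stage so that the next map in the chain can be applied to it — which is precisely why the theorem statement bundles both properties together rather than proving injectivity alone.
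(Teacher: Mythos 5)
Your proposal follows essentially the same route as the paper's proof: decompose the evolutionary sequence into single steps, use the inclusion map for the subgraph case and the collapse map $d_m'\mapsto d_m$ for the duplication case (with the key fact that $d_m$ and $d_m'$ are never distinguishable, packaged in Lemma~\ref{lem:doubleGoodMap}), then compose. Your observation that both properties must be carried together so the composition chains correctly is exactly the structure of the paper's argument, so the proposal is correct and matches the paper.
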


\begin{proof}
Let $G_1,\dots, G_M$ be the  evolutionary path connecting ancestor $G$ with the current graph $G'$, where $G_m:=(V_m,E_m,\ell_m)$. At each step, we construct a map $\Phi_m$ from $G_{m+1}$ to $G_m$ satisfying the required conditions. The composition $\Phi := \Phi_{1} \circ \dots \circ \Phi_{M-1}$ then verifies the desired result.

We now construct $\Phi_{m}$. If $G_{m+1}$ is a subgraph of $G_m$, let $\Phi_m$ be the inclusion map $\iota\colon V_{m+1} \hookrightarrow V_m$. The inclusion map is obviously a graph homomorphism, and is injective on all of $V_{m+1}$. Let $i,j\in V_{m+1}$ be distinguishable vertices in $G_{m+1}$, and let $k$ be a distinguisher of $i$ and $j$. Since $\iota$ is a homomorphism, $\iota(k) = k \in V_m$ is a distinguisher of $\iota(i),\iota(j) \in V_m$. 

If $G_{m+1} = \mathscr{D}_{d_m}(G_m)$, let $\Phi_m\colon V_{m+1} \to V_m$ be defined as
\[
\Phi_m(i) := \begin{cases}
d_m & \text{if } i=d_m'\\
i & \text{otherwise}
\end{cases} \ .
\]
We verify by using Definition~\ref{defn:nodeDouble} that this map satisfies the required properties in Lemma~\ref{lem:doubleGoodMap}. 
\end{proof}

It is worth noting that the proof of Theorem~\ref{thm:lookBackMap} is constructive; however, the construction relies on the knowledge of the specific evolutionary path, i.e a sequence of events that form the graph sequence $G_1,\dots, G_M$. In almost all applications, this sequence is unknown or only partially understood. However the existence of the homomorphism allows us to conclude features of $G$ using knowledge  of the graph $G'$. 

\begin{cor}\label{cor:primordialSubgraph}
Let $G$ be the ancestor of $G'$. Any distinguished subgraph of $G'$ is isomorphic to a subgraph of $G$. 
\end{cor}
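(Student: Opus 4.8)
The plan is to deduce the corollary directly from Theorem~\ref{thm:lookBackMap}, which already does the substantive work. Let $H = (U, E_H, \ell'|_{E_H})$ be a distinguished subgraph of $G'$, so that by definition its vertex set $U$ is a distinguishable set in $G'$. Applying Theorem~\ref{thm:lookBackMap} to the ancestor relationship between $G$ and $G'$ yields a graph homomorphism $\Phi\colon V' \to V$ whose restriction $\Phi|_U$ is injective and sends $U$ to a distinguishable set $\Phi(U)$ in $G$. The goal is then to exhibit $\Phi|_U$ as a graph isomorphism from $H$ onto a suitable subgraph of $G$.

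First I would pin down the candidate target subgraph $H' := (\Phi(U), E_{H'}, \ell|_{E_{H'}})$, where $E_{H'} := \{(\Phi(i),\Phi(j)) : (i,j) \in E_H\}$ is the image of the edge set of $H$. Since $\Phi$ is a homomorphism, every such pair lies in $E$ with $\ell(\Phi(i),\Phi(j)) = \ell'(i,j)$; hence $E_{H'} \subseteq E \cap (\Phi(U)\times\Phi(U))$ and, with the inherited labeling, $H'$ is genuinely a subgraph of $G$ in the sense of the subgraph definition.

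Next I would verify that $\Phi|_U$ is an isomorphism $H \to H'$. Injectivity on vertices is immediate from the theorem, and surjectivity onto $\Phi(U)$ is by construction, so $\Phi|_U$ is a vertex bijection. Because $\Phi|_U$ is injective on vertices, the induced map on pairs $(i,j) \mapsto (\Phi(i),\Phi(j))$ is injective, hence a bijection from $E_H$ onto $E_{H'}$; label preservation follows from the homomorphism identity $\ell'(i,j) = \ell(\Phi(i),\Phi(j))$. Thus $H \cong H'$ with $H'$ a subgraph of $G$, which is the claim.

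The argument is largely bookkeeping, and the only point requiring genuine care is the distinction between a homomorphism and an isomorphism: an injective homomorphism need not \emph{reflect} edges, since $G$ could carry edges between $\Phi(i)$ and $\Phi(j)$ that do not descend from $H$. I sidestep this by defining $H'$ to have precisely the image edge set rather than the full induced subgraph on $\Phi(U)$, which forces the edge correspondence to be a bijection. In the undirected setting I would add the brief remark that $G$ is then undirected as well and that $\Phi$ preserves labeled edges, so $E_{H'}$ inherits the required symmetry and $H'$ qualifies as an undirected subgraph.
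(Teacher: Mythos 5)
Your proposal is correct and follows the same route as the paper: apply Theorem~\ref{thm:lookBackMap}, restrict $\Phi$ to the distinguishable vertex set $U$, and observe that an injective homomorphism is an isomorphism onto its image. The paper's proof is a three-line version of this; your extra care in defining the image subgraph by its image edge set (rather than the induced subgraph) is exactly what "onto its image" implicitly means, so the two arguments coincide.
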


\begin{proof}
Consider a distinguished subgraph of $G'$ with vertex set $U\subseteq V'$. Since $U$ is distinguishable, by Theorem~\ref{thm:lookBackMap} $\Phi|_U$ is an injective graph homomorphism, so it is an isomorphism onto its image. Therefore, $\Phi|_U$ is the desired isomorphism.
\end{proof}

This result describes structures that must have been present in any ancestor graph $G$, and puts a lower bound on the size of $G$. 

\begin{defn}
The \emph{distinguishability} of a graph $G = (V,E,\ell)$ is the size of a maximum distinguishable subset $U \subseteq V$. Let $\mathtt{D}(G)$ denote the distinguishability of a graph $G$. 
\end{defn}

\begin{cor}\label{cor:disNumberConserved}
Let $G$ be the ancestor of $G'$. The distinguishability of $G$ is greater than or equal to the distinguishability of $G'$,
$$\mathtt{D}(G) \geq \mathtt{D}(G').$$
\end{cor}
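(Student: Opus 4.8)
The plan is to derive this directly from Theorem~\ref{thm:lookBackMap}, which does essentially all of the work. The key observation is that $\mathtt{D}(G')$ is by definition the cardinality of some distinguishable set in $G'$, and the theorem hands us a map that simultaneously preserves the cardinality of distinguishable sets (via injectivity) and carries them to distinguishable sets in the ancestor. Combining these two features converts a maximum-size distinguishable set in $G'$ into an equally large distinguishable set in $G$, which immediately yields the inequality.

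Concretely, I would first choose a distinguishable set $U \subseteq V'$ realizing the maximum, so that $|U| = \mathtt{D}(G')$. Next, since $G$ is an ancestor of $G'$, I would invoke Theorem~\ref{thm:lookBackMap} to obtain a graph homomorphism $\Phi\colon V' \to V$ with the stated properties, and then apply those properties to this specific set $U$. Because $U$ is distinguishable, $\Phi|_U$ is $1$-to-$1$, so $|\Phi(U)| = |U| = \mathtt{D}(G')$; and again because $U$ is distinguishable, $\Phi(U)$ is a distinguishable set in $G$. Thus $G$ possesses a distinguishable set of size $\mathtt{D}(G')$, and since $\mathtt{D}(G)$ is the size of a \emph{maximum} distinguishable set in $G$, we conclude $\mathtt{D}(G) \geq |\Phi(U)| = \mathtt{D}(G')$.

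At this level there is no substantial obstacle remaining, since the difficulty has already been absorbed into Theorem~\ref{thm:lookBackMap} (and the supporting Lemma~\ref{lem:doubleGoodMap}). The only point that genuinely requires attention is using \emph{both} conclusions of the theorem in tandem: the injectivity of $\Phi|_U$ is what prevents the map from collapsing vertices and shrinking the set, while the image-distinguishability is what certifies that $\Phi(U)$ actually witnesses distinguishability in $G$. If $\Phi$ were merely a homomorphism sending distinguishable sets to distinguishable sets without the injectivity guarantee, the cardinality inequality could fail; so in writing the proof I would take care to flag that the $1$-to-$1$ clause of Theorem~\ref{thm:lookBackMap} is exactly the ingredient that makes the size comparison valid.
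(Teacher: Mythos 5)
Your proposal is correct and follows exactly the paper's own argument: apply Theorem~\ref{thm:lookBackMap} to a maximum distinguishable set $U$ in $G'$, using injectivity of $\Phi|_U$ to preserve cardinality and the image-distinguishability to witness a set of that size in $G$. The only difference is that you spell out the choice of a maximum $U$ and the final maximization step, which the paper leaves implicit.
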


\begin{proof}
Let $U\subseteq V'$ be a distinguishable set in $G'$. Then $\Phi(U)$ is distinguishable in $G$, and since $\Phi|_U$ is injective, $\abs{\Phi(U)} = \abs{U}$.
\end{proof}

Identifying distinguishable sets can be  computationally  challenging, and so we recast the problem of finding distinguishable sets in terms of a more familiar computational problem. We construct a new graph whose cliques are distinguishable sets of the original graph.

\begin{defn}
The \emph{distinguishability graph} of $G = (V,E,\ell)$ is a undirected graph $D(G) := (V,E^\ast,\emptyset)$ where $(i,j) \in E^\ast$ if and only if $i$ and $j$ are distinguishable in $G$. 
\end{defn}

Recall that a set of vertices is distinguishable if and only if each pair of vertices in that set is distinguishable.  Therefore distinguishable sets in $G$ are cliques in the distinguishability graph $D(G)$, see SI Section~\ref{sec:computationComplexity}. We also prove that the clique problem is efficiently reducible to calculating the distinguishability of a graph. Since it is easy to show computing distinguishability is in the class $\mathcal{NP}$, this reduction implies that computing the distinguishability is $\mathcal{NP}$-complete.

\subsection{Distinguishability Deviation}\label{sec:resultsComputation}
We now search for  consequences of Corollary \ref{cor:disNumberConserved}  in inferred biological networks. To do so, we seek a metric that evaluates how the distinguishability of a network compares with expected distinguishability in an appropriately selected class of  random graphs. Since vertex duplication cannot increase distinguishability, we expect genetic networks to exhibit low distinguishability when compared with  similar random  graphs. The most obvious graphs to compare against are those with the same structure as $G$, and with the same expected fraction of positive and negative edges as $G$, but in which each edge has a randomly assigned label. Before formalizing this notion in Definition \ref{def:expected dist}, we adjust our perspective on undirected graphs in order to reduce notational complexity. For the rest of this manuscript, we adopt the convention that if $E$ is an edge set for an undirected graph, then $E \subseteq \{\{i,j\} : i,j \in V\}$, i.e.\ edges of undirected graphs are unordered pairs of vertices. The notation $e \in E$ then refers to $e = (i,j)$ in a directed graph and $e=\{i,j\}$ in an undirected graph.

\begin{defn} \label{def:expected dist}
Let $G =(V,E,\ell)$ be a graph. We define the probability of each label in $G$ by counting its relative edge label abundance 
\begin{equation}\label{eq:relativeAbundance}
\mathbf{p}_G(a):= \frac{\abs{ \{e \in E : \ell(e) = a\}}}{|E|} \ . 
\end{equation}
Let $\{\ell_r\}_{r\in R}$ be the set of all possible edge label maps, $\ell_r\colon E \to L$, where $R$ is an index set. Denote $G_r := (V,E, \ell_r)$ to be the graph with the same vertices and edges as $G$ but with  edge labels determined by $\ell_r$. We define the \emph{expected distinguishability of $G$} as
\begin{equation}\label{eq:expDis}
\left\langle\mathtt{D}(G)\right\rangle := \sum_{r\in R} P(G_r) \mathtt{D}(G_r).
\end{equation}
where
\begin{equation}\label{eq:probabilityofGraph}
P(G_r) = \prod_{e \in E}  \mathbf{p}_G( \ell_r(e)).
\end{equation}
We interpret $P(G_r)$ as the probability of the graph $G_r$ conditioned on using the unlabeled structure of $G$.

In addition, we define the \emph{distinguishability deviation} of $G$ as the difference between its distinguishability and its  expected distinguishability, i.e.\
\begin{equation} \label{eq:rel dist}
    \mathtt D(G) - \langle \mathtt D(G)\rangle.
\end{equation}
\end{defn}

Expected distinguishability $\langle \mathtt D(G)\rangle$ can be approximated by  randomly relabeling $G$ with probability according to Equation~\eqref{eq:probabilityofGraph} and calculating the distinguishability of the resultant graph. Repeating the process multiple times and averaging yields an approximation of expected distinguishability. We utilize this method in our calculations of distinguishability deviation in Section \ref{sec:Results}. In particular, the  distinguishability deviations in Figure \ref{fig:sfig1} were  calculated by averaging over 10 random graphs. The distinguishability deviations of the biological  networks  in  Equation~\eqref{eq:RelativeDistRealNet} were found by averaging over 100 random graphs. 

The results of distinguishability deviation calculations in published biological networks and simulated networks lead us to the following conjecture.

\begin{conjecture}\label{theMainConjecture}
Let $\mathcal{G}_n$ be the set of all graphs  $G=(V,E,\ell)$ with $n$ vertices.
Let $\mathcal{U}_n \subseteq \mathcal{G}_n$ be the set of those graphs for which 
\begin{equation} \label{eq:conj}
 \frac{1}{\abs{V}} \sum_{d\in V} \left\langle\mathtt{D}(\mathscr D_d(G))\right\rangle - \left\langle\mathtt{D}(G)\right\rangle > 0;
\end{equation}
that is,  the set of graphs for which the  expected distinguishability increases under vertex duplication. Then the fraction of graphs with this property approaches $1$ for large graphs
\[ \lim_{n \to \infty} \frac{|\mathcal{U}_n|}{|\mathcal{G}_n|} = 1 .\]
\end{conjecture}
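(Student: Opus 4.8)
The plan is to work under the counting (uniform) measure on $\mathcal{G}_n$ implicit in the conjecture, for which a typical graph has each ordered pair of vertices joined by an edge independently with probability $|L|/(1+|L|)$ and, when present, a uniformly random label. First I would record the structural facts that hold with probability tending to $1$: all in- and out-degrees concentrate around $\tfrac{|L|}{1+|L|}n$, every pair of vertices has $\Theta(n)$ common neighbors, and the label frequencies $\mathbf{p}_G$ lie within $O(1/n)$ of the uniform distribution. On this ``typical'' event I would prove the per-vertex inequality $\langle \mathtt{D}(\mathscr{D}_d(G))\rangle > \langle \mathtt{D}(G)\rangle$ for every $d$, which is stronger than the averaged statement in Equation~\eqref{eq:conj} (we in fact only need the average); combined with the density of the typical event this gives $|\mathcal{U}_n|/|\mathcal{G}_n|\to 1$.

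The key reformulation is through vertex covers of the non-distinguishability graph. Writing $\tau_0$ for the minimum vertex-cover number of the complement $\overline{D(G_r)}$ of the distinguishability graph of a random relabeling $G_r$, and $\tau$ for the same quantity of $\overline{D(\mathscr{D}_d(G)_r)}$, one has $\mathtt{D}(G_r)=|V|-\tau_0$ and $\mathtt{D}(\mathscr{D}_d(G)_r)=|V|+1-\tau$. Under the coupling that relabels the common edges identically and the edges at the duplicate $d'$ with fresh labels, every distinguisher in $G_r$ survives in $\mathscr{D}_d(G)_r$, so $\overline{D(\mathscr{D}_d(G)_r)}$ restricted to $V$ is a subgraph of $\overline{D(G_r)}$; adjoining $d'$ together with the forced non-distinguishable edge $\{d,d'\}$ (a vertex and its duplicate are never distinguishable) can raise the cover number by at most one. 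This yields the pointwise bound $\tau \le \tau_0+1$, hence the non-strict estimate $\langle \mathtt{D}(\mathscr{D}_d(G))\rangle - \langle \mathtt{D}(G)\rangle = \E[\tau_0-\tau+1] \ge 0$ when the same label distribution is used on both sides.

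Next I would extract the strict gain. Since $\E[\tau_0-\tau+1]\ge \Pr(\tau\le\tau_0)$, it suffices to find events of this type: $\tau\le\tau_0$ occurs whenever $G_r$ has an isolated non-distinguishable pair $\{a,b\}$ with $a,b\in N(d)$ that the fresh copy $d'$ happens to distinguish, since the repair frees exactly the vertex consumed by the edge $\{d,d'\}$. A fixed pair is non-distinguishable with probability $q_{\mathrm{pair}}=\exp(-\Theta(n))$ and is repaired by $d'$ with constant conditional probability, so summing over the $\Theta(n^2)$ candidate pairs in $N(d)$ gives a structural gain of order $n^2 q_{\mathrm{pair}}$. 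I would then control the one remaining discrepancy, namely that the conjecture uses $\mathbf{p}_{\mathscr{D}_d(G)}$ rather than $\mathbf{p}_G$ in the expectation: duplication shifts the label frequencies by only $O(1/n)$, and because $\mathbf{p}_G$ sits within $O(1/n)$ of the minimizer of $\|\mathbf{p}\|_2^2$ the induced relative change in the pairwise non-distinguishability probability is $O(1/n)$, so the distribution correction is of order $n\, q_{\mathrm{pair}}$. Adding the contributions leaves $\langle \mathtt{D}(\mathscr{D}_d(G))\rangle - \langle \mathtt{D}(G)\rangle \ge c\,n^2 q_{\mathrm{pair}} - O(n\, q_{\mathrm{pair}}) > 0$ for all large $n$.

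The hard part will be this third step: both the gain and the correction are exponentially small, so the argument rests on a genuine second-order asymptotic comparison rather than a crude bound. In particular I would need sharp control of $q_{\mathrm{pair}}$ across pairs with differing common-neighborhood sizes (the dominant bad pairs are exactly those with the fewest common neighbors), a uniform lower bound on the repair probability for precisely those pairs, and a proof that the first-order sensitivity of $\langle \mathtt{D}\rangle$ to the label frequencies vanishes at the uniform distribution, so that the correction is truly $O(n\, q_{\mathrm{pair}})$ and not $\Theta(n^2 q_{\mathrm{pair}})$. Making these estimates uniform over all $d$ and over the $1-o(1)$ fraction of typical structures, and dispatching the directed/undirected and self-loop bookkeeping, should then be routine.
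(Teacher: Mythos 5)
First, a point of orientation: the paper does not prove this statement --- it is deliberately left as Conjecture~\ref{theMainConjecture}, and the only rigorous result offered in its direction is Lemma~\ref{lem:towardMainConjecture} in SI Section~\ref{section:towards conj}, which establishes $\langle\mathtt{D}(G)\rangle \le \langle\mathtt{D}(\mathscr{D}_d(G))\rangle$ (strictly, when $d$ has a neighbor) only after replacing the empirical label distributions $\mathbf{p}_G$ and $\mathbf{p}_{\mathscr{D}_d(G)}$ by one fixed distribution $\mathbf{p}$ used on both sides; the paper explicitly names the drift of the empirical distribution under duplication as the obstruction it does not overcome. Your first two steps reproduce the content of that lemma: the coupling that relabels $E$ identically and gives $d'$ fresh labels, together with the fact that $d$ and $d'$ are never distinguishable, yields $\mathtt{D}(G_r)\le\mathtt{D}(G'_s)$ pointwise. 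The paper obtains this directly from Corollary~\ref{cor:disNumberConserved} (since $G_r$ is a subgraph of $G'_s$), whereas you route it through the Gallai identity $\mathtt{D}=|V|-\tau$ for the complement of the distinguishability graph; the two derivations are equivalent, though your vertex-cover bookkeeping is a convenient frame for the strictness argument.

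The genuinely new content is your third step, and that is where the gap sits. Both the ``gain'' and the distribution ``correction'' are exponentially small, and you need to compare their leading constants, yet the proposal treats the pair non-distinguishability probability as a single number $q_{\mathrm{pair}}$. For a typical $G$ the number of potential distinguishers of a pair $\{i,j\}$ fluctuates by $\Theta(\sqrt{n\log n})$ about its mean, so $q_{ij}$ varies across pairs by factors of $e^{\Theta(\sqrt{n\log n})}$; consequently both $\sum_{\{i,j\}}q_{ij}$ (which controls the correction) and $\sum_{\{i,j\}\subseteq N(d)}q_{ij}$ (which controls the gain) are dominated by a few extreme pairs, and for a fixed $d$ those extreme pairs need not lie in $N(d)$. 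The advertised bound $c\,n^2q_{\mathrm{pair}}-O(n\,q_{\mathrm{pair}})>0$ therefore does not follow as stated, and the per-vertex strengthening you propose to prove first may simply fail for some $d$. What can save the statement is exactly the averaging over $d$ that Equation~\eqref{eq:conj} permits: every pair lies in $N(d)$ for $\Theta(n)$ of the $n$ choices of $d$, so $\frac1n\sum_{d}\sum_{\{i,j\}\subseteq N(d)}q_{ij}=\Theta\bigl(\sum_{\{i,j\}}q_{ij}\bigr)$, which dominates a correction of order $\frac1n\sum_{\{i,j\}}q_{ij}$. You would need to restructure the estimate around this averaged identity, and you still owe (i) a second-moment argument showing the expected cover numbers equal $(1+o(1))$ times the corresponding first-moment sums, and (ii) a uniform verification that the sensitivity of $q_{ij}$ to the label frequencies is genuinely second order at the near-uniform $\mathbf{p}_G$, for every possible labeling of the edges at $d$. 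As written, this is a credible research program that goes beyond the paper's partial result, but it is not yet a proof.
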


If Conjecture \ref{theMainConjecture} is true it would imply vertex duplication decreases distinguishability deviation on average for the majority of large graphs. This  follows from  Corollary \ref{cor:disNumberConserved} which shows duplication does not increase distinguishability.  Therefore,  if duplication increases expected distinguishability, it must decrease distinguishability deviation.  
 Part of the difficulty in proving Conjecture \ref{theMainConjecture} arises because the distribution of edge labels in $G' = \mathscr{D}_d (G)$ and $G$ may be significantly different, which causes the probabilities of edge label assignments  $\ell_r$ to  change significantly between $G$ and $G'$. 
 
 However, as evidence  in support  of the conjecture we prove a version of Conjecture~\ref{theMainConjecture} in SI Section~\ref{section:towards conj} for a modified expected distinguishability that is taken over a fixed probability of edge labels. To provide the main idea of the proof, 
fix a probability of edge labels, which is be used for both $G$ and $G' = \mathscr{D}_d (G)$. Let $\{ \ell_r\}$ and $\{ \ell'_s\}$ be the sets of all possible edge label maps of $G$ and $G'$ respectively, and denote $ G_r := (V,E, \ell_r)$ and $ G'_s := (V',E', \ell_s')$.
 For this fixed labeling probability, if we randomize the labels of $G$ then the probability of a specific labeling $ \ell_r: V \to L$ is the same as the probability of any labeling $ \ell_s:V' \to L$ such that $ \ell_s|_V =  \ell_r$. Therefore, the probability of a specific $ G_r$  is the same as the probability of any such $ G_s'$. Then, noting that $ G_r$ is a subgraph of $ G_s'$, it follows from Corollary~\ref{cor:disNumberConserved} with $ G_s'$ as an ancestor of $ G_r$ that $\mathtt{D}( G_s') \geq \mathtt{D} ( G_r)$, as required.

This shows that if the expected distinguishability is taken over a fixed labeling probability, then the expected distinguishability of a graph $G$ cannot be more than that of $G'$. In fact, we show in SI Section~\ref{section:towards conj} that under this assumption as long as $d'$ has at least one neighbor, then the modified expected distinguishability of $G'$ is strictly greater than that of $G$.

\appendix

\section{Proof of Lemma \ref{lem:doubleGoodMap}}\label{section:technicalLemmas}

\begin{lem}\label{lem:doubleGoodMap}
Let $G= (V,E,\ell)$ be a graph. Let $G' = \mathscr{D}_d (G) = (V',E',\ell')$, for some $d \in V$. Let $\phi \colon V' \to V$ be the map defined as
\[
\phi(i) := \begin{cases}
d & \text{if } i=d'\\
i & \text{otherwise}
\end{cases} \ .
\]
Then $\phi$ is a graph homomorphism such that for all distinguishable sets $U \subseteq V'$, the restriction $\phi|_U$ is 1-to-1, and $\phi(U)$ is a distinguishable set in $G$.
\end{lem}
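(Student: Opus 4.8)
The plan is to verify the three asserted properties of $\phi$ in sequence, with the substance of the argument concentrated in showing that the duplicated pair $d, d'$ can never be distinguishable in $G'$.

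First I would check that $\phi$ is a graph homomorphism by a direct case analysis over the four types of edge in $E'$ enumerated in Definition~\ref{defn:nodeDouble}. A type-1 edge $(i,j) \in E$ has neither endpoint equal to $d'$, so $\phi$ fixes both endpoints and preserves the label trivially. For the type-2, type-3, and type-4 edges, $\phi$ sends $d'$ back to $d$, and the defining clauses of Definition~\ref{defn:nodeDouble} say precisely that the corresponding edge $(\phi(i),\phi(j))$ lies in $E$ and carries the same label. This step is routine.

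Next, and this is the crux, I would prove that $d$ and $d'$ are not distinguishable in $G'$. Suppose toward a contradiction that some $k \in V'$ distinguishes them. I would split along the two clauses of Definition~\ref{def:distinguishability}: the common-successor clause, Equation~\eqref{eq:dist1}, and the common-predecessor clause, Equation~\eqref{eq:dist2}. Within each, I would further consider the subcases $k \notin \{d,d'\}$, $k=d$, and $k=d'$. In every subcase, tracing the edge incident to $d'$ back through the clauses of Definition~\ref{defn:nodeDouble} shows that the edge between $d'$ and $k$ inherits exactly the label of the edge between $d$ and $k$, forcing $\ell'(d,k)=\ell'(d',k)$ (or its predecessor analogue) and contradicting the assumption that $k$ is a distinguisher. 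Since $\phi$ agrees with the identity on $V' \setminus \{d'\}$ and the only collision it creates is $\phi(d')=\phi(d)=d$, this non-distinguishability shows that no distinguishable set $U$ can contain both $d$ and $d'$; hence $\phi|_U$ is $1$-to-$1$.

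Finally I would show $\phi(U)$ is distinguishable in $G$. Given distinct $u,v \in \phi(U)$, injectivity of $\phi|_U$ yields distinct preimages $i,j \in U$, which are distinguishable in $G'$ by some distinguisher $k$. I would claim that $\phi(k)$ distinguishes $u$ and $v$ in $G$: applying the homomorphism property established in the first step to the two edges witnessing the distinguishability of $i$ and $j$ produces two edges of $E$ between $\phi(k)$ and $u,v$ whose labels equal the $\ell'$-labels of the original edges and therefore remain distinct, verifying Equation~\eqref{eq:dist1} or Equation~\eqref{eq:dist2} for $\phi(k)$. I expect the main obstacle to be the second step: although it is conceptually clear that $d'$ mirrors the adjacencies of $d$ with identical labels, the formal argument must carefully dispose of the bookkeeping subcases in which the putative distinguisher $k$ is itself $d$ or $d'$, where the self-loop clause and the clauses linking $d$ to $d'$ demand separate treatment.
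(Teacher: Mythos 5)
Your proposal is correct and follows essentially the same route as the paper's proof: verify the homomorphism property by cases on the clauses of Definition~\ref{defn:nodeDouble}, show $d$ and $d'$ are never distinguishable (so a distinguishable set contains at most one of them, giving injectivity of $\phi|_U$), and push distinguishers forward through the homomorphism to show $\phi(U)$ is distinguishable. The only difference is that you spell out more subcases for the non-distinguishability step than the paper does; the paper disposes of it in one line by noting that $(d,k)\in E'$ iff $(d',k)\in E'$ with equal labels (and likewise for predecessors), which already covers $k\in\{d,d'\}$.
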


\begin{proof}
We first show $\phi$ is a graph homomorphism. Let $i,j\in V'$. If $i,j\neq d'$, then $(\phi(i),\phi(j))=(i,j)$. Inspecting Definition~\ref{defn:nodeDouble} we see $(i,j)\in E$ if and only if $(i,j) \in E'$, and  $\ell(i,j) = \ell'(i,j)$. 

Now suppose $i=d'$ and $j\neq d'$. The case where $i\neq d'$ and $j= d'$ follows a symmetric argument. Suppose that $(d',j) \in E'$. Then $(\phi(d'),\phi(j))=(d,j)$, and from the construction of $E'$ in Definition~\ref{defn:nodeDouble} we see that $(d',j)\in E'$ if and only if $(d,j)\in E$. Finally, by definition, $\ell'(d',j) = \ell(d,j)$. When $i=j=d'$, the proof follows similarly.

To prove the properties of $\phi$ on a distinguishable set, we first show that $d$ and $d'$ are not distinguishable. 
Suppose by way of contradiction that $k$ is a distinguisher of $d$ and $d'$ in $G'$. From the definition of vertex duplication, if $(d,k)\in E'$, then $(d',k)\in E'$, and $\ell'(d,k) = \ell'(d',k)$. Similarly, $(k,d)\in E'$, then $(k,d')\in E'$, and $\ell'(k,d) = \ell'(k,d')$. Therefore, neither \eqref{eq:dist1} nor \eqref{eq:dist2} in Definition~\ref{def:distinguishability} can be satisfied, a contradiction. 
We conclude that $d$ and $d'$ are not distinguishable.

Let $U \subseteq V'$ be a distinguishable set. Then since $d$ and $d'$ are not distinguishable, $U$ can contain at most one of them. Notice that $\phi$ is 1-to-1 on $V \setminus \{d\}$, as well as on $V \setminus \{d'\}$. Consequently $\phi|_U$ is 1-to-1. 

Finally, we show that $\phi(U)$ is distinguishable. Let $i,j \in U$. Let $k$ be a distinguisher of $i$ and $j$. Then since $\phi$ is a graph homomorphism, it respects edge labels, so $\phi(k)$ is a distinguisher of $\phi(i)$ and $\phi(j)$.
\end{proof}

\section*{Acknowledgements} 
TG was partially supported by National Science Foundation grant DMS-1839299 and National Institutes of Health grant 5R01GM126555-01.
PCK and RRN were supported by the National Institutes of Health grant 5R01GM126555-01.
BC was supported by  National Science Foundation grant DMS-1839299. We acknowledge the Indigenous nations and peoples who are the traditional owners and caretakers of the land on which this work was undertaken at the University of Calgary and Montana State University. 
\bibliography{network}{}
\bibliographystyle{ieeetr}
\twocolumn

\section{SI: Toward Conjecture \ref{theMainConjecture}}
\label{section:towards conj}
We now prove a restricted version of Conjecture~\ref{theMainConjecture}. 

The difficulty in proving Conjecture~\ref{theMainConjecture} arises because the distribution of edge labels in a graph may significantly change after a vertex is duplicated. To avoid this, we present a more manageable version of expected distinguishability, where the expected value is taken over the same probability both before and after the vertex duplication. Recall Equation~\eqref{eq:probabilityofGraph}, which is repeated below
\begin{equation*}
P(G_r) = \prod_{e \in E}  \mathbf{p}_G( \ell_r(e)).
\end{equation*}

Notice that $P(G_r)$ depends implicitly on the original graph $G$, as the probabilities $\mathbf{p}_G$ are determined from $\ell$ in Equation~\eqref{eq:relativeAbundance}. To simplify, we fix a set of probabilities $\{\mathbf{p}(a)\}_{a\in L}$ with $\mathbf{p}(a)\geq 0$ for all $a \in L$, and $\sum_{a\in L} \mathbf{p}(a)=1$, and such that there are at least two labels $a,b\in L$ with $a \neq b$ such that $\mathbf{p}(a) >0 $ and $\mathbf{p}(b)>0$. Using this set, we redefine $P(G_r)$, the probability of choosing a graph $ G_r = (V, E,  \ell_r)$, as
\[
P( G_r) := \prod_{e \in E} \mathbf{p}( \ell_r(e)) \ .
\]
We are now equipped to present and prove a restricted version of Conjecture~\ref{theMainConjecture}. Under the redefined probabilities, Equation \eqref{eq:towardMainConjectureWithBreve} in the following lemma is analogous to showing all terms in the sum of Equation \eqref{eq:conj} in the manuscript are non-negative. Furthermore, as long as the graph $G$ contains at least one edge, at least one term is strictly greater than zero. Of course, as the size of the graph goes to infinity, the fraction of graphs with at least one edge approaches one.

\begin{lem}\label{lem:towardMainConjecture}
Let $G = (V,E,\ell)$. Let $d \in V$ be arbitrary. Let $G' = (V', E' ,\ell') = \mathscr D_d(G)$. Fix $\{\mathbf{p}(a)\}_{a\in L}$ as above. Let $\{ \ell_r\}_{r\in R}$ and $\{ \ell'_s\}_{s\in S}$ be the set of all possible edge label maps of $G$ and $G'$ respectively, for some $R$ and $S$ index sets.
 Denote $ G_r := (V,E, \ell_r)$ and $ G'_s := (V',E', \ell_s')$.
Then
\begin{equation}\label{eq:towardMainConjectureWithBreve}
\sum_{r\in R}  P( G_r) \mathtt{D}( G_r)
\leq 
\sum_{s\in S}  P( G'_s) \mathtt{D}( G'_s),    
\end{equation}
Furthermore, the inequality is strict if $d$ has at least one neighbor.
\end{lem}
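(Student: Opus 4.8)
The plan is to prove the inequality \eqref{eq:towardMainConjectureWithBreve} by partitioning the index set $S$ according to how each labeling of $G'$ restricts to $G$, and then comparing term-by-term using Corollary~\ref{cor:disNumberConserved}. For each $r\in R$, set $S_r := \{s\in S : \ell'_s|_E = \ell_r\}$, so that $S = \bigsqcup_{r\in R} S_r$. The key structural observation is that for every $s\in S_r$ the graph $G_r$ is a subgraph of $G'_s$: indeed $V\subseteq V'$, $E\subseteq E'$, and the labels agree on $E$ by the definition of $S_r$. Hence $G'_s$ is an ancestor of $G_r$ (via the one-step subgraph operation), and Corollary~\ref{cor:disNumberConserved} gives the pointwise bound $\mathtt{D}(G'_s)\geq \mathtt{D}(G_r)$.

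Next I would carry out the probability bookkeeping to show $\sum_{s\in S_r} P(G'_s) = P(G_r)$. Splitting $E' = E \sqcup (E'\setminus E)$ and using that $\ell'_s$ agrees with $\ell_r$ on $E$, each $P(G'_s)$ factors as $P(G_r)\prod_{e\in E'\setminus E}\mathbf{p}(\ell'_s(e))$. Since the labels of the various $s\in S_r$ range freely and independently over the new edges $E'\setminus E$, summing and using $\sum_{a\in L}\mathbf{p}(a)=1$ collapses that product to $1$, yielding $\sum_{s\in S_r}P(G'_s)=P(G_r)$. Combining with the pointwise bound, $\sum_{s\in S_r}P(G'_s)\mathtt{D}(G'_s)\geq \mathtt{D}(G_r)\sum_{s\in S_r}P(G'_s)=\mathtt{D}(G_r)P(G_r)$, and summing over $r\in R$ proves \eqref{eq:towardMainConjectureWithBreve}.

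For the strict inequality when $d$ has a neighbor $k$, it suffices to exhibit a single $r_0$ with $P(G_{r_0})>0$ together with some $s^*\in S_{r_0}$ with $P(G'_{s^*})>0$ and $\mathtt{D}(G'_{s^*})>\mathtt{D}(G_{r_0})$, since then the $r_0$ block contributes a strict surplus while every other block remains $\geq$. The natural and robust choice is the monochromatic labeling: fix $a\neq b$ with $\mathbf{p}(a),\mathbf{p}(b)>0$, and let $\ell_{r_0}\equiv a$ on all of $E$, so that no distinguisher can exist, no two vertices are distinguishable, and $\mathtt{D}(G_{r_0})=1$. Then I relabel exactly one of the duplicated edges joining $d'$ to $k$ with $b$ (namely $(d',k)$ if $(d,k)\in E$, the symmetric edge $(k,d')$ if $(k,d)\in E$, or one of the new self-loop-induced edges if $k=d$), leaving every other edge labeled $a$. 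This $\ell'_{s^*}$ restricts to $\ell_{r_0}$ on $E$, has positive probability, and makes $k$ a distinguisher of $d$ and $d'$ through \eqref{eq:dist1} (respectively \eqref{eq:dist2}), so $\{d,d'\}$ is distinguishable and $\mathtt{D}(G'_{s^*})\geq 2 > 1$.

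The main obstacle is the strictness argument, not the inequality itself. The tempting route—take a maximum distinguishable set $U$ of an arbitrary $G_r$ and try to enlarge it by $d'$—can fail: the single edge flip needed to separate $d$ from $d'$ may simultaneously destroy the distinguisher of $d'$ from another vertex of $U$ when $\abs{L}=2$, leaving $\mathtt{D}(G'_s)$ unchanged. Choosing the monochromatic base labeling sidesteps this entirely by driving $\mathtt{D}(G_{r_0})$ down to its minimum, so that the lone distinguishable pair $\{d,d'\}$ already beats it. The remaining care is purely bookkeeping: confirming the subgraph/ancestor relation in the presence of a self-loop at $d$, and verifying that the partition $S=\bigsqcup_r S_r$ is exhaustive and disjoint.
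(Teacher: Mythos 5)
Your proposal is correct and follows essentially the same route as the paper's proof: the partition $S=\bigsqcup_r S_r$ is exactly the fiber decomposition of the paper's map $\xi\colon S\to R$, the pointwise bound comes from the same application of Corollary~\ref{cor:disNumberConserved}, the collapse $\sum_{s\in S_r}P(G'_s)=P(G_r)$ is the same total-probability argument, and the strictness construction (monochromatic labeling with a single flipped edge between $d'$ and a neighbor $k$) matches the paper's, with your handling of the $k=d$ self-loop case being a slightly more careful touch.
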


\begin{proof}
We expand the right hand side of \eqref{eq:towardMainConjectureWithBreve} as
\begin{align*}
& \sum_{s\in S}  P( G'_s) \mathtt{D}( G'_s) \\
&=  \sum_{s\in S} \left(\prod_{e \in E'} \mathbf{p}( \ell'_s(e)) \right) \mathtt{D}( G'_s) \\
&=  \sum_{s\in S} \left(\prod_{e \in E} \mathbf{p}( \ell'_s(e))\right)\left( \prod_{e \in E' \setminus E} \mathbf{p}( \ell'_s(e))\right) \mathtt{D}( G'_s) \\
\end{align*}
We now make a key observation: for each $s\in S$, there exists a unique $r \in R$ such that $ \ell_r =  \ell'_s|_E$, so  define a map $\xi : S \to R$ via $\xi (s) = r$ if and only if $ \ell_r =  \ell'_s|_E$. 
In what follows we use the more compact notation $\xi s = r$. With this insight, note that
\[
\prod_{e \in E} \mathbf{p}( \ell'_s(e)) =  \prod_{e \in E} \mathbf{p} ( \ell_{\xi  s}(e)) =  P( G_{\xi s})
\]
We  continue to rewrite the right hand side as
\begin{align*}
&  \sum_{s\in S} \left(\prod_{e \in E} \mathbf{p}( \ell'_s(e)) \right)\left( \prod_{e \in E' \setminus E} \mathbf{p}( \ell'_s(e))\right) \mathtt{D}( G'_s) \\
&=  \sum_{s\in S}  P( G_{\xi s} )\left( \prod_{e \in E' \setminus E} \mathbf{p}( \ell'_s(e))\right) \mathtt{D}( G'_s) \\
&= \sum_{r\in R} \sum_{\substack{s\in S \\ \xi s =r} }  P( G_{\xi s} )\left( \prod_{e \in E' \setminus E} \mathbf{p}( \ell'_s(e)) \right) \mathtt{D}( G'_s) \\
&= \sum_{r\in R}  P( G_{r} ) \sum_{\substack{s\in S \\ \xi s =r} } \left( \prod_{e \in E' \setminus E} \mathbf{p}( \ell'_s(e))\right) \mathtt{D}( G'_s) \\
\end{align*}
Note that $ G_{\xi s}$ is a subgraph of  $ G'_{s}$. Then applying Corollary~\ref{cor:disNumberConserved} we have $\mathtt{D}( G'_s) \geq \mathtt{D}( G_{\xi s})$. Therefore
\begin{align*}
& \sum_{r\in R}  P( G_{r} ) \sum_{\substack{s\in S \\ \xi s =r} } \left( \prod_{e \in E' \setminus E} \mathbf{p}( \ell'_s(e))\right) \mathtt{D}( G'_s) \\
&\geq \sum_{r\in R}  P( G_{r} ) \sum_{\substack{s\in S \\ \xi s =r} } \left( \prod_{e \in E' \setminus E}  \mathbf{p}( \ell'_s(e))\right) \mathtt{D}( G_{\xi  s}) \\
&= \sum_{r\in R}  P( G_{r} ) \sum_{\substack{s\in S \\ \xi s =r} } \left( \prod_{e \in E' \setminus E}  \mathbf{p}( \ell'_s(e))\right) \mathtt{D}( G_r) \\
&= \sum_{r\in R}  P( G_{r} ) \mathtt{D}( G_r) \sum_{\substack{s\in S \\ \xi s =r} } \left( \prod_{e \in E' \setminus E}  \mathbf{p}( \ell'_s(e))\right)  \\
&= \sum_{r\in R}  P( G_{r} ) \mathtt{D}( G_r)
\end{align*}
which is the desired result. The last inequality holds because for any fixed $r$
\begin{align*}
\sum_{\substack{s\in S \\ \xi s =r} } \left( \prod_{e \in E' \setminus E}  \mathbf{p}( \ell'_s(e))\right) = 1
\end{align*}
since the sum can be thought of as over all possible relabelings of $E' \setminus E$, of which the total probability is $1$.

We now  show that the inequality is strict if $d$ has at least one neighbor. To do so we construct a specific pair $ \ell'_q$ and $\ell_{\xi q}$ so that $\mathtt{D}( G'_q) = \mathtt{D}( G_{\xi q}) + 1$. Let $\mathbf{p}(a)$ and $\mathbf{p}(b)$ be two non-zero elements.  Let $k\in V$ be a neighbor of $d$. Let $ \ell'_q$ be the constant map on $a \in L$, except for the edge between $d'$ and $k$, where it takes the value $b\in L$. Then $\mathtt{D}( G'_q) =2$, as $d$ and $d'$ are distinguishable and no other vertex can be distinguishable from both $d$ and $d'$. Since $\ell_{\xi q}$ is the constant map on $a$, $\mathtt{D}( G_{\xi q})=1$, completing the proof. 
\end{proof}

\section{SI: Computational Complexity}\label{sec:computationComplexity}
 For notational convenience we return to Definition \ref{defn:undirectedGraph} for our definition of undirected graph. 

\begin{defn}
Let $G= (V,E,\ell)$ be a undirected graph. A subset $U \subseteq V$ is a \emph{clique} if for all $i,j \in U$ the vertices $i$ and $j$ are neighbors. We refer to a clique $U$ of size $|U| = m$ as a $m$-clique. 
\end{defn}

\begin{prob}[Distinguishability]
Given a graph $G = (V,E,\ell)$, a number $m \in \N$, and a label set $L$ with $|L| \leq |E|$, decide if $G$ contains a  distinguishable set of size $m$. Let \textsc{Distinguishability} be the associated decision problem represented as a language.  
\end{prob}

Although we do not specify a particular way of encoding graphs as strings in the language \textsc{Distinguishability}, any encoding that is polynomial in $|V|$ is sufficient. For discussion on decision problems as languages and graph encodings see \cite{Barak2016}.




\begin{thm}\label{lem:findClique}
A graph $G$ has a distinguishable set of size $m$ if and only if $D(G)$ has an $m$-clique.
\end{thm}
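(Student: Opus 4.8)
The plan is to prove this biconditional by directly unwinding the two definitions, exploiting the fact that the distinguishability graph $D(G) = (V,E^\ast,\emptyset)$ is built on the \emph{same} vertex set $V$ as $G$. Consequently a vertex subset $U \subseteq V$ has the same cardinality whether viewed in $G$ or in $D(G)$, so the cardinality bookkeeping is immediate and the only real content is the equivalence of the two ``for all pairs'' conditions. I would fix an arbitrary $U \subseteq V$ with $\abs{U} = m$ and show that $U$ is a distinguishable set in $G$ if and only if $U$ is an $m$-clique in $D(G)$.

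For the forward direction, I would assume $U$ is distinguishable in $G$. Then for every pair of distinct $i,j \in U$, the vertices $i$ and $j$ are distinguishable in $G$, which by the defining property of $E^\ast$ means precisely that $(i,j) \in E^\ast$, i.e.\ $i$ and $j$ are neighbors in $D(G)$. Since this holds for every such pair, $U$ satisfies the clique condition in $D(G)$, and because $\abs{U} = m$ it is an $m$-clique. The reverse direction is symmetric: if $U$ is an $m$-clique in $D(G)$, then every distinct pair $i,j \in U$ is adjacent in $D(G)$, hence $(i,j) \in E^\ast$, hence $i$ and $j$ are distinguishable in $G$, so $U$ is a distinguishable set of size $m$.

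Honestly, there is no substantive obstacle here; the statement is essentially a definitional restatement, and this is by design, since the point of introducing $D(G)$ is to recast distinguishable sets as cliques so that the machinery of the clique problem applies. The only two points I would take care to make explicit are that the relation ``$i$ and $j$ are distinguishable'' is symmetric in $i$ and $j$ (so it is consistent with $D(G)$ being undirected), and that the distinguishable-set condition quantifies only over $i \neq j$, matching the clique condition exactly. With those observations in place the equivalence for a fixed $U$ is immediate, and since the equivalence is uniform in $m$, the existence statements ``$G$ has a distinguishable set of size $m$'' and ``$D(G)$ has an $m$-clique'' coincide.
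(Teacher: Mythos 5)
Your proposal is correct and follows essentially the same definition-unwinding argument as the paper's own proof, which likewise observes that $U \subseteq V$ is distinguishable in $G$ exactly when every pair in $U$ is joined by an edge of $E^\ast$, i.e.\ forms a clique in $D(G)$. Your added remarks on the symmetry of the distinguishability relation and the $i \neq j$ quantification are fine points of care but do not change the substance.
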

\begin{proof}
$(\Rightarrow)$ Let $U \subset V$ be a distinguishable set in $G$ of size $m$. Then all pairs $i,j \in U$ are distinguishable which implies $(i,j) \in E^*$ for all $i,j \in U$. Thus $U$ is a $m$-clique of $G(D)$. 

$(\Leftarrow)$ Let $U \subset V$ be a $m$-clique in $D(G)$. Then all pairs $i,j \in U$ are distinguishable in $G$. Thus $U$ is a distinguishable set in $G$ of size $m$. 
\end{proof}

The following Corollary is  immediate.

\begin{cor}
A graph $G$ has a distinguishability $m$ if and only if the size of a maximum clique in $D(G)$ is $m$.
\end{cor}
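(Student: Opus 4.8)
The plan is to derive this corollary directly from Theorem~\ref{lem:findClique}, which already establishes the size-by-size correspondence between distinguishable sets of $G$ and cliques of $D(G)$. By definition, the distinguishability $\mathtt{D}(G)$ is the size of a \emph{maximum} distinguishable subset, so the entire task reduces to transferring this extremal quantity across the equivalence of the theorem. I would argue the two inequalities separately and then combine them.

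First I would set $m := \mathtt{D}(G)$ and observe that, by definition of maximum distinguishable subset, $G$ possesses a distinguishable set of exactly size $m$. Applying the forward direction of Theorem~\ref{lem:findClique} then yields an $m$-clique in $D(G)$, so the maximum clique size of $D(G)$ is at least $m$. For the reverse inequality, I would suppose $D(G)$ had a clique of size $m' > m$; the backward direction of Theorem~\ref{lem:findClique} would then produce a distinguishable set in $G$ of size $m' > m$, contradicting the maximality of $m = \mathtt{D}(G)$. Hence the maximum clique size is at most $m$, and together with the first step it equals $m$.

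Finally I would package this as the biconditional in the statement: the distinguishability of $G$ equals $m$ precisely when the maximum clique of $D(G)$ has size $m$, since both quantities have just been shown to coincide for every graph $G$. No genuine obstacle arises here—the content is entirely carried by Theorem~\ref{lem:findClique}, and the only point requiring care is the elementary bookkeeping that a maximum over a family is preserved under a size-respecting equivalence, which is why the corollary can fairly be called immediate.
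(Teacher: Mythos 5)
Your argument is correct and is exactly the routine extremal bookkeeping the paper has in mind when it declares the corollary ``immediate'' from Theorem~\ref{lem:findClique}: both directions of the biconditional follow from the two inequalities you establish. There is nothing to add or correct.
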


We now show that finding the distinguishability is 
$\mathcal{NP}$-complete.

\begin{lem}
$\textsc{Distinguishability } \in \mathcal{NP}$
\end{lem}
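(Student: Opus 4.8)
The plan is to show membership of \textsc{Distinguishability} in $\mathcal{NP}$ by exhibiting a polynomial-time verifier that, given a certificate, confirms a ``yes'' instance. The natural certificate for an instance $(G, m, L)$ is a candidate distinguishable set $U \subseteq V$ with $|U| = m$, encoded as a list of vertices. Since $m \leq |V|$, this certificate has size polynomial in $|V|$, and hence polynomial in the size of any reasonable encoding of the input, as required for an $\mathcal{NP}$ certificate.

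The verifier proceeds as follows. First I would check that $U$ is a valid subset of $V$ of cardinality exactly $m$; this is a linear scan. The main work is then to verify that $U$ is genuinely distinguishable, i.e.\ that every pair $i, j \in U$ with $i \neq j$ is distinguishable in the sense of Definition~\ref{def:distinguishability}. For each of the $\binom{m}{2} = O(|V|^2)$ pairs, I would search for a distinguisher $k$: a vertex that is a mutual neighbor of $i$ and $j$ witnessing either Equation~\eqref{eq:dist1} or Equation~\eqref{eq:dist2}. Concretely, for a fixed pair $(i,j)$ I scan all $k \in V$ and test whether $(i,k),(j,k) \in E$ with $\ell(i,k) \neq \ell(j,k)$, or whether $(k,i),(k,j) \in E$ with $\ell(k,i) \neq \ell(k,j)$. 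Each such test is a constant number of edge-membership and label-equality lookups, so checking one pair costs $O(|V|)$ and checking all pairs costs $O(|V|^3)$. The accept condition is that every pair admits at least one distinguisher; if any pair has none, the verifier rejects.

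Correctness of the verifier follows directly from the definitions: $U$ is a distinguishable set precisely when every pair in $U$ is distinguishable, which is exactly what the verifier confirms. Thus a ``yes'' instance has an accepted certificate (take $U$ to be any witnessing distinguishable set), and conversely any accepted certificate exhibits a distinguishable set of size $m$, so no ``no'' instance is accepted under any certificate. Since the verifier runs in time polynomial in $|V|$ and the certificate is polynomially bounded, this establishes $\textsc{Distinguishability} \in \mathcal{NP}$.

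I do not anticipate a substantive mathematical obstacle here; the argument is a routine verifier construction. The only points requiring minor care are bookkeeping ones: confirming that the certificate length and the verifier running time are genuinely polynomial in the chosen graph encoding (which the paper has already deferred to \cite{Barak2016} by allowing any encoding polynomial in $|V|$), and ensuring the distinguisher search faithfully implements both the directed cases \eqref{eq:dist1} and \eqref{eq:dist2} of Definition~\ref{def:distinguishability}, noting that in the undirected case these two conditions coincide so the check simplifies but remains correct.
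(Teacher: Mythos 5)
Your proposal is correct and follows essentially the same route as the paper's proof: the certificate is the list of $m$ vertices forming the distinguishable set, and the verifier checks each pair for a distinguisher by iterating over candidate mutual neighbors, all in time polynomial in $|V|$. You simply spell out the $O(|V|^3)$ bookkeeping in more detail than the paper does.
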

\begin{proof} 
Let $S \in \textsc{Distinguishability}$ be an instance of distinguishability with graph $G = (V,E,\ell)$, label set $L$, and number $m$. Let the certificate for $S$ be a list of $m$ vertices that constitute a distinguishable set $U$. Clearly this certificate has length polynomial in $|V|$.  A deterministic algorithm to verify this certificate is to check if $i$ and $j$ are distinguishable  for every $i,j \in U$ by iterating over all mutual neighbors of $i$ and $j$. This algorithm has running time polynomial in $|V|$. Therefore, $\text{\textsc{Distinguishability}} \in \mathcal{NP}$.
\end{proof}

\begin{prob}[Clique Number]
Given a simple undirected graph $G$ and a number $m \in \N$, decide if $G$ has a clique of size $m$. Let \textsc{Clique} be the associated decision problem represented as a language.
\end{prob}

\begin{lem}
$\textsc{Distinguishability}$ is $\mathcal{NP}$-hard
\end{lem}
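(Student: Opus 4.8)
The plan is to prove $\mathcal{NP}$-hardness of \textsc{Distinguishability} by giving a polynomial-time many-one reduction from \textsc{Clique}, which is a classical $\mathcal{NP}$-complete problem. Given an instance $(G, m)$ of \textsc{Clique} with $G = (V, E)$ a simple undirected graph, I will construct in polynomial time an instance $(G^\dagger, L, m^\dagger)$ of \textsc{Distinguishability} such that $G$ has an $m$-clique if and only if $G^\dagger$ has a distinguishable set of size $m^\dagger$. The guiding idea is to realize the given \emph{adjacency} relation on $V$ as the \emph{distinguishability} relation on $G^\dagger$: I want to build a labeled graph whose distinguishability graph $D(G^\dagger)$ (Definition of distinguishability graph) is exactly $G$, so that cliques in $G$ correspond, via Theorem~\ref{lem:findClique}, to distinguishable sets in $G^\dagger$.

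The construction I would use takes $L = \{+1, -1\}$ and introduces one \emph{distinguisher gadget} per edge of $G$. Concretely, put the original vertices $V$ into $G^\dagger$, and for each edge $\{i,j\} \in E$ add a fresh vertex $k_{ij}$ together with the two directed edges $(i, k_{ij})$ and $(j, k_{ij})$, labeling them with different labels, say $\ell(i,k_{ij}) = +1$ and $\ell(j, k_{ij}) = -1$. Then $k_{ij}$ is a common successor-neighbor of $i$ and $j$ whose two incoming edges carry distinct labels, so by Definition~\ref{def:distinguishability} (Equation~\eqref{eq:dist1}) the pair $i,j$ becomes distinguishable. The key structural claim is that no \emph{other} pairs become distinguishable: the gadget vertices $k_{ij}$ each have degree two and no two original vertices share a gadget unless they were adjacent in $G$, so for $i, j \in V$ the only possible distinguishers are the $k_{ij}$ gadgets, and such a gadget exists precisely when $\{i,j\} \in E$. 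I would also note that the gadget vertices $k_{ij}$ have no mutual neighbors with each other that could distinguish them, so they do not spuriously enlarge any distinguishable set that I care about. Setting $m^\dagger = m$ then makes distinguishable sets among the $V$-vertices coincide with cliques of $G$.

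With the construction in hand, the proof splits into the two implications. For the forward direction, a clique $U \subseteq V$ of size $m$ in $G$ has every pair $i,j \in U$ adjacent, hence each such pair has its distinguisher gadget $k_{ij}$ in $G^\dagger$, so $U$ is a distinguishable set of size $m$ in $G^\dagger$. For the reverse direction, I need that a distinguishable set of size $m^\dagger$ in $G^\dagger$ yields an $m$-clique in $G$; this is where I must rule out the gadget vertices participating in large distinguishable sets, or argue that any distinguishable set can be pushed onto the $V$-vertices. The cleanest route is to verify that $D(G^\dagger)$ restricted to $V$ is isomorphic to $G$ and that each gadget vertex $k_{ij}$ is distinguishable from very few other vertices, so a maximum distinguishable set may be taken inside $V$; then Theorem~\ref{lem:findClique} finishes the equivalence.

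The main obstacle I anticipate is precisely the control of \emph{unintended} distinguishability: I must guarantee the gadget introduces distinguishers for exactly the adjacent pairs and for nothing else, and that the auxiliary vertices $k_{ij}$ do not combine to create a larger distinguishable set than the intended clique. The most delicate bookkeeping is therefore the analysis of which pairs share a mutual neighbor and with which labels, especially bounding how distinguishable the gadget vertices are among themselves; if any two gadget vertices turned out mutually distinguishable together with $V$-vertices, the sizes would no longer match. Handling this requires either keeping the gadget vertices' neighborhoods disjoint enough to isolate them, or adding a short padding argument showing a maximum distinguishable set can always be chosen within $V$ without loss of size. Once that isolation is established, the polynomial bound on $|V(G^\dagger)| = |V| + |E|$ and on the construction time is immediate, so the reduction is polynomial and the $\mathcal{NP}$-hardness follows.
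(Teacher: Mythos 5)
Your reduction is the same in outline as the paper's (one degree-two gadget vertex per edge of $G$, receiving differently labeled edges from its two endpoints, so that the distinguishability relation on $V$ reproduces adjacency in $G$), but there is a genuine gap in the isolation step, and the specific claim you lean on is false. You assert that ``the gadget vertices $k_{ij}$ have no mutual neighbors with each other that could distinguish them.'' They do: for two edges $\{i,j\}$ and $\{i,s\}$ of $G$ sharing the endpoint $i$, the vertex $i$ is a mutual neighbor of $k_{ij}$ and $k_{is}$, and under your labeling scheme with $L=\{+1,-1\}$ the labels $\ell(i,k_{ij})$ and $\ell(i,k_{is})$ depend on an arbitrary choice of which endpoint of each edge gets $+1$; whenever these disagree, $i$ distinguishes $k_{ij}$ from $k_{is}$. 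So the distinguishability graph of your $G^\dagger$ is not simply $G$ plus isolated vertices, and the reverse direction of the reduction does not follow from what you have written. (The construction can in fact be rescued: by bipartiteness no gadget vertex is ever distinguishable from a $V$-vertex, and with two labels a pairwise-distinguishable set of gadget vertices corresponds to a set of edges of $G$ pairwise sharing endpoints with differing labels at each shared endpoint, which forces either a star through one vertex of size at most $|L|=2$ or a triangle of size $3$; in either case $G$ already contains a clique at least that large, so $\mathtt{D}(G^\dagger)$ still equals the clique number of $G$ when $E\neq\emptyset$. But this combinatorial argument is exactly the ``delicate bookkeeping'' you flag and then leave undone.)

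The paper sidesteps the issue entirely with a different choice of label set: it takes $L=V$ and labels the edge between $i$ and the gadget for $\{i,j\}$ by the name $i$ itself. Then any mutual neighbor $j$ of two gadget vertices assigns them the \emph{same} label $j$, so no vertex can ever distinguish two gadgets, and the distinguishability graph restricted to the gadgets is edgeless by construction rather than by a case analysis. If you want to keep $|L|=2$ (which makes the hardness statement slightly stronger, since it holds for signed graphs), you must supply the star/triangle argument above; otherwise, adopt the paper's vertex-named labels and the isolation becomes a one-line observation.
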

\begin{proof}
We proceed by showing a many-to-one deterministic polynomial-time reduction of \textsc{Clique}, which is $\mathcal{NP}$-complete \cite{Karp1972}, to \textsc{Distinguishability}. That is, we show a map from instances of \textsc{Clique} to instances of \textsc{Distinguishability} that is efficiently computable, and that maps `yes' instances of \textsc{Clique} to `yes' instances of \textsc{Distinguishability} and maps `no' instances of \textsc{Clique} to `no' instances of \textsc{Distinguishability}. 

Consider an  arbitrary instance of \textsc{Clique}  with a simple undirected input graph $G = (V,E,\emptyset)$ and input number $m$. We aim to construct a new graph which has distinguishability equal to the size of the largest clique in $G$. Consider the undirected graph $H = (E\cup V,E_{H}, \ell)$ with label set $L = V$ where 
\begin{equation}
(i,(i,j)), (j,(i,j)) \in E_{H} \text{ iff } (i,j) \in E
\end{equation}
and  
\begin{equation}
    \ell((i,j),i) =  \ell(i,(i,j)) = i.
\end{equation}  In other words for each (undirected) edge $(i,j) \in E$ of the graph $G$, we assign directed edges  from $i\in V$ to $(i,j) \in E$ and
from  $i\in V$ to $(i,j) \in E$ in the graph $H$. There are no edges in $H$ between $i\in V$ and $j \in V$  and no edges between $(i,j) \in E$ and $(k,s) \in E$ and therefore $H$ is bipartite with partition of vertices  $(V,E)$. 

We now show the distinguishability of $H$ is $m$. Consider the distinguishability graph $D(H) = (E\cup V, E_{D(H)}, \emptyset)$. First notice that, because $H$ is bipartite, there is no edge in $D(H)$ between a vertex $j \in V$ and a vertex  $(i,k) \in E$ as $(i,k)$ and $j$ cannot have a mutual neighbor in $H$. Furthermore, there is no edge in $D(H)$ between two vertices $(i,k),(r,s) \in E$ because if $(i,k)$ and $(r,s)$ have a mutual neighbor $j$ then $j$ is not a distinguisher of $(i,k)$ and $(r,s)$ due to all edge labels being identical, i.e.
\begin{align*}
     j &= \ell(j, (r,s)) 
    \\ &= \ell((i,k),j) 
    \\ &= \ell((r,s), j) 
    \\ &= \ell(j, (i,k)).
\end{align*}

Now we show for any $i,j \in V$ there is an edge $(i,j) \in  E_{D(H)}$ if and only if $(i,j) \in E$. 
If $(i,j) \in E$ then $(i,(i,j)) , (j,(i,j)) \in E_H$. Also $ \ell(i,(i,j))  = i$  and $ \ell(j,(i,j)) = j$ so $ \ell(i,(i,j)) \neq \ell(j,(i,j))$ which means $i$ and $j$ are distinguishable in $H$. Therefore, $(i,j) \in E_{D(H)}$. 

Now suppose $(i,j) \notin E$. Then $i$ and $j$ have no mutual neighbors in $H$ and so they are not distinguishable in $H$. This implies $(i,j) \notin E_{D(H)}$. We have shown that the only edges in $D(H)$ connect $i, j \in V$ such that $(i,j) \in E$.  Furthermore, since $(i,j) \in E_{D(H)}$ if and only if $(i,j) \in E$, the subgraph of $D(H)$ induced by $V$ is isomorphic to $G$. Therefore, there is a $m$-clique in $D(H)$ if and only if there is a $m$-clique in $G$.  

The many-to-one reduction is given by the function 
\begin{equation}
    \phi: (G,m) \mapsto (H,m). 
\end{equation}
This function, which amounts to constructing the graph $H$,  can be computed by an algorithm which iterates over the set  $(V \cup E) \times (V \cup E)$ of possible edges in $H$. This algorithm takes time polynomial in $|V| + |E|$, and so $\phi$ is efficiently computable. 
\end{proof}

Note that the graph $H$ is undirected so the completeness holds even if the problem \textsc{Distinguishability} is restricted to undirected graphs. 

\begin{cor}
$\textsc{Distinguishability}$ is ${\mathcal{NP}\textit{-complete}}$. 
\end{cor}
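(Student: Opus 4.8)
The plan is to invoke the definition of $\mathcal{NP}$-completeness directly: a language is $\mathcal{NP}$-complete exactly when it belongs to $\mathcal{NP}$ and it is $\mathcal{NP}$-hard. Both halves of this conjunction have just been established for $\textsc{Distinguishability}$ in the two preceding lemmas, so the corollary is an immediate deduction rather than a fresh argument. I would simply cite those two results and conclude.

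First I would recall membership in $\mathcal{NP}$: a certificate for a `yes' instance $(G,m)$ is a list of $m$ vertices forming a distinguishable set $U$, which has length polynomial in $|V|$, and verification amounts to checking that every pair $i,j \in U$ is distinguishable by scanning their mutual neighbors in $H$, which runs in time polynomial in $|V|$. This is exactly the content of the lemma asserting $\textsc{Distinguishability} \in \mathcal{NP}$. Next I would recall $\mathcal{NP}$-hardness, supplied by the deterministic polynomial-time many-one reduction $\phi\colon (G,m) \mapsto (H,m)$ from $\textsc{Clique}$, where $H$ is the bipartite gadget on vertex set $V \cup E$ constructed so that the subgraph of $D(H)$ induced by $V$ is isomorphic to $G$; this ensures that $G$ has an $m$-clique if and only if $H$ has a distinguishable set of size $m$, and since $\textsc{Clique}$ is $\mathcal{NP}$-complete by Karp, $\textsc{Distinguishability}$ is $\mathcal{NP}$-hard.

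Combining the two facts yields the corollary: $\textsc{Distinguishability}$ is in $\mathcal{NP}$ and is $\mathcal{NP}$-hard, hence $\mathcal{NP}$-complete. I would also remark, as in the note following the hardness lemma, that because the gadget $H$ is undirected the result holds even for the restriction of $\textsc{Distinguishability}$ to undirected graphs.

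As for where the difficulty lies, the corollary itself presents no obstacle; it is a one-line consequence of combining the membership and hardness lemmas. The genuine content of the complexity analysis is concentrated entirely in the hardness reduction, specifically in verifying that the only edges of $D(H)$ are those joining $i,j \in V$ with $(i,j) \in E$, so that no spurious distinguishable pairs inflate the distinguishability of $H$ beyond the clique number of $G$. Once that reduction is correct, the NP-completeness statement requires nothing further.
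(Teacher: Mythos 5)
Your proposal is correct and matches the paper's intent exactly: the corollary is stated without proof precisely because it follows immediately from combining the preceding lemmas establishing membership in $\mathcal{NP}$ and $\mathcal{NP}$-hardness. Your recap of the certificate-verification argument and the reduction from \textsc{Clique} is faithful to the paper's lemmas, so nothing further is needed.
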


\section{SI: Numerical Simulations }

\label{app:numerical}

\begin{figure}

\begin{subfigure}{.45\textwidth}
  \centering
  \includegraphics[width=\linewidth]{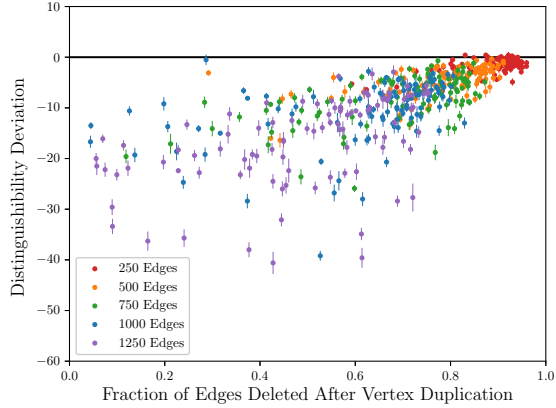}
      \caption{500 graphs, each with 250 vertices, generated by taking evolved graphs $G_i$ (as in Figure \ref{fig:sfig1}) and generating a new graph $J_i$ which has the same signed degree distribution of $G_i$ but is otherwise randomized. Colors and axis are same as Figure \ref{fig:sfig1}.   }
  \label{fig:sfig3}
\end{subfigure}
\begin{subfigure}{.45\textwidth}
  \centering
  \includegraphics[width=\linewidth]{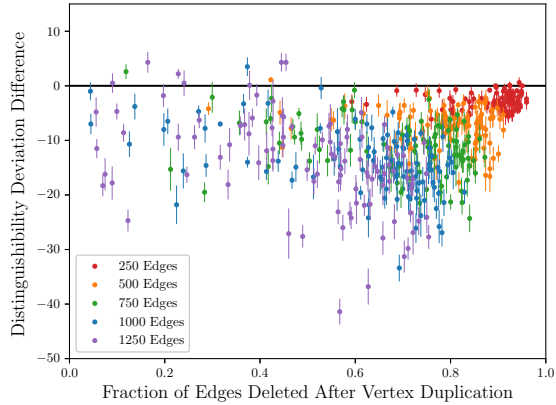}
  \caption{Point-wise difference between distinguishability change in Figure \ref{fig:sfig1} and in Figure \ref{fig:sfig3}. This difference shows that change in distinguishability of the evolved graphs in Figure \ref{fig:sfig1} which cannot be attributed to single vertex characteristics.
}  \label{fig:sfig4}
\end{subfigure}
\caption{Distinguishability deviation of directed graphs. }
\label{fig:fig}
\end{figure}
In this section we give a complete description of our numerical simulations and give evidence that negative distinguishability deviation cannot be explained solely through a graph's signed degree distribution or by small world properties.  

We first describe the procedure for generating the evolved graphs and their corresponding ER-graphs. The distinguishability deviation of these graphs are shown in Figure \ref{fig:sfig1}. Our implementation of this procedure, which can be found at \cite{gitcode}, employs the NetworkX Python package \cite{networkx}. The following description uses the convention $[n] := \{1,2,\dots,n\}$. 
\begin{enumerate}
\setlength{\itemsep}{6pt}
    \item Randomly generate 500 ER-graphs, each with 25 vertices where the fraction of positive edges are chosen uniformly at random from $(.25, .75)$ and the edge density $2|E|/(|V| - 1)|V|$ is chosen uniformly at random from $[1/2(25), 2/25]$ (rounding up to the nearest whole edge).
    
    \item Perform duplication on a random vertex 225 times to each graph generating 500 graphs each with 250 vertices.
    
    \item Divide this set of 500 graphs into 5 sets of 100. Randomly remove edges from graphs in the first set until a final edge count of 250 is reached for each graph.  Repeat for the last four sets using final edge counts of 500, 750, 1000, and 1250 respectively. We refer to the set of graphs  $\{G_i\}_{i \in [500]}$ as the evolved graphs.
    
    \item For each evolved graph $G_i$, calculate its distinguishability $\mathtt D(G_i)$. 
    
    \item \label{stp:expected} For each evolved graph $G_i$ randomly generate 10 new graphs $ G_{i,j}$ with probability $P_i( G_{i,j})$ (the probability distribution in Definition \ref{def:expected dist}) that have the same adjacencies but with a random edge labeling.  Estimate their expected distinguishability by
    \begin{equation}
       \langle \mathtt D(G_i) \rangle \approx \langle \mathtt D(G_i) \rangle_\approx := \frac{1}{10} \sum
       _{j \in [10]}\mathtt D(G_{i,j}).
    \end{equation}
    
    \item Calculate the approximate distinguishability deviation  $\mathtt D(G_i) - \langle \mathtt D(G_i) \rangle_\approx$ of the evolved graphs.  
    
    \item For each evolved graph $G_i$, randomly generate an ER-graph $H_i$ with the same number of vertices, edges, and positive and negative labels as $G_i$, and calculate its distinguishability $\mathtt D(H_i)$.
    
    \item For each graph $H_i$ compute $\langle H_i \rangle_\approx$ as in Step \ref{stp:expected}. 
    
    \item  Calculate the distinguishability deviation  $\mathtt D(H_i) - \langle \mathtt D(H_i) \rangle_\approx$ of the ER-graphs and compute the standard deviation.  
\end{enumerate}
In Figures \ref{fig:sfig1} and \ref{fig:und fig1}, point color indicates final number of edges after edge deletion of the evolved graphs $\{G_i\}$. Grey points represent the ER-graphs $\{H_i\}$. The vertical axis denotes distinguishability deviation. The horizontal axis gives the fraction of edges which are removed in the deletion process. Note that the above procedure applies to both directed and undirected graphs, the data for which are given in Figures \ref{fig:sfig1} and \ref{fig:und fig1} respectively.

It is natural to ask if these distinguishability deviations can be explained entirely through a graph's signed degree distribution. The idea is that distinguishers must have two edges of differing sign which, in the directed case, are either both incoming or both outgoing. As a result, we expect graphs for which most vertices have all edges of the same sign to have low distinguishability. When the edge labels of these networks are randomized, the homogeneity of signed degree can be removed, potentially creating higher distinguishability.

To address this question we compute, for each evolved graph, a random network which has the same signed degree as the original network. To generate these networks we use the following algorithm in which we randomly connect edge stubs of matching sign. Starting from a list of edge stubs for each vertex and a graph with no edges, we randomly choose two edge stubs, remove them from their respective lists, and add an edge between their respective vertices. Note that when we randomly choose edge stubs we do not allow choice of edge stubs where the introduced  edge  would create a multigraph. In the undirected case this means we do not pick edge stubs between already connected vertices.

If the only edge stubs that remain are on two vertices $v_1$ and $u_1$ such that adding a edge between these vertices would create a muligraph, a random rewiring is performed. That is, a randomly chosen previously added edge $(v_2,u_2)$ is removed from the graph and the edges $(v_1,u_2)$ and $(v_2,u_1)$ are added. If for all such edges $(v_2,u_2)$ this rewiring would create a multigraph, then the random graph generation is restarted. This process is used for both directed and undirected graphs with the difference being that for directed graphs \textit{in}-edge stubs are matched with \textit{out}-edge stubs. Python code for generating these graphs is provided at \cite{gitcode}.

For directed graphs, we plot distinguishability deviation of the graphs generated by this procedure in Figure \ref{fig:sfig3}. Note that each point in this figure is in one-to-one correspondence with the colored points of Figure \ref{fig:sfig1}. From this data we see that the signed degree preserved randomizations exhibit significant negative distinguishability deviation. However, this negative distinguishability deviation is less strong than the distinguishability deviation observed for the evolved graphs. Investigating further, Figure \ref{fig:sfig4} shows the point-by-point difference between distinguishability deviations of the evolved graphs and their randomized signed degree-preserving counterparts. From this figure it is apparent that, almost always, the randomized versions of the evolved graphs exhibit lower distinguishability deviation. These results are replicated in Figures \ref{fig:und fig3} and \ref{fig:und fig4} for undirected graphs. We conclude that the large distinguishability deviation observed in the evolved graphs can not be explained solely through signed degree distribution. 

\begin{figure}
\begin{subfigure}{.45\textwidth}
 \centering
 \includegraphics[width=\linewidth]{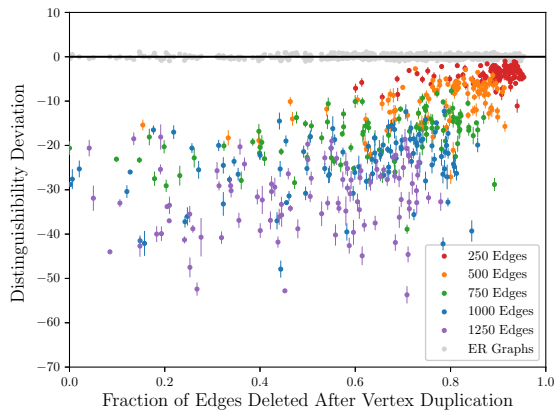}
 \caption{Same as Figure \ref{fig:sfig1} but with undirected graphs}
 \label{fig:und fig1}
\end{subfigure}
\begin{subfigure}{.45\textwidth}
  \centering
  \includegraphics[width=\linewidth]{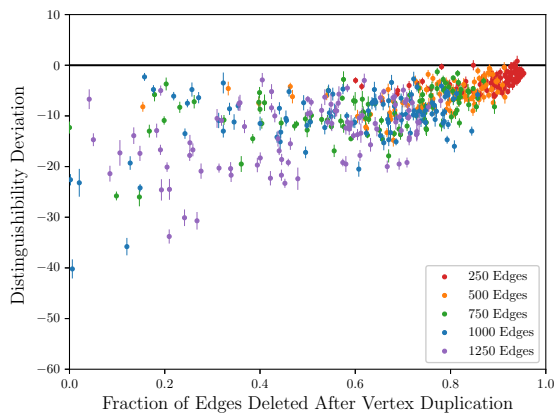}
  \caption{Same as Figure \ref{fig:sfig3} but with undirected graphs}
  \label{fig:und fig3}
\end{subfigure}
\begin{subfigure}{.45\textwidth}
  \centering
  \includegraphics[width=\linewidth]{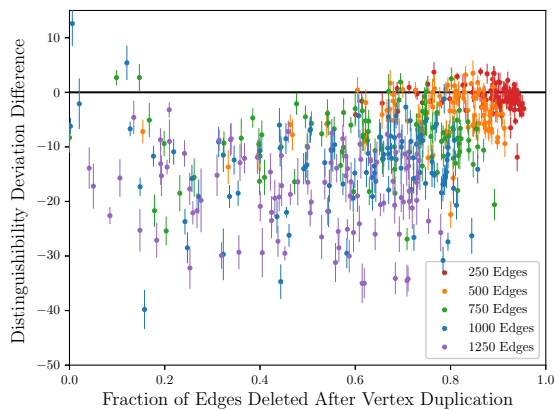}
  \caption{Point-wise difference between distinguishabilities in Figure \ref{fig:und fig1} and in Figure \ref{fig:und fig3}
}  \label{fig:und fig4}
\end{subfigure}
\caption{Distinguishability deviation of undirected graphs. }
\label{fig:und fig}
\end{figure}

We also computed the distinguishability deviation in the experimentally derived networks of \cite{vin14} and \cite{Collombet2017} along with their signed degree preserved randomizations, as shown in Table \ref{tab:results}. These results agree with simulations since both networks exhibited stronger negative distinguishability deviation than their preserved sign degree sequence randomizations. We conclude that the signed degree distribution of a network can not entirely predict its distinguishability deviation.

This table also includes the distinguishability deviation of both directed and undirected Erd{\"o}s-R{\'e}nyi graphs (ER) and Watts-Strogatz graphs \cite{Watts1998} with characteristics similar to the published biological networks.
For the ER-graphs, number of vertices and number of positive and negative edges are the same as the corresponding biological networks. For the Watts-Strogatz graphs,  we picked the number of vertices and mean degree $k$ to be the same as the biological networks.  We used a rewiring probability of $\beta = .1$ to target the small-world regime of low path length and high clustering observed in \cite{Watts1998}. In the Watts-Strogatz model we randomly assigned edge signs with the probability at which to occur in the original network.
 For both models, direction is randomly assigned to edges when generating directed graphs. 

Since the generation of these  random  graph models assigns edge labels randomly, we expect near zero average distinguishability deviation. However, we are interested in the standard deviation of the distinguishability deviation since this describes the likelihood to produce outliers with large negative distinguishability deviation in these random models. 
The observed small standard deviation suggests that these models are unlikely to produce graphs with distinguishability deviation near that observed in the biological networks. The distinguishability deviation of graphs generated by the Watts-Strogatz model is nearly the same as the ER-graphs, suggesting that small world properties have little to no effect on distinguishability deviation.

\begin{table*}[t]
\centering
\begin{tabular}{l|l|l|l|l|}
\cline{2-5}
                                              & \multicolumn{2}{c|}{D. Melanogaster} & \multicolumn{2}{c|}{Blood Cell} \\ \cline{2-5} 
                                              & Dist.             & Dist. Deviation   & Dist.          &  Dist. Deviation \\ \hline
\multicolumn{1}{|l|}{Original graphs}         & 7                 & $-24.2 \pm .7$    & 4              & $-1.6 \pm 0.6$  \\ \hline
\multicolumn{1}{|l|}{Preserved signed degree} & $4.94 \pm 0.4$    & $-1.0 \pm 0.6$     & $4.9 \pm .6$   & $-0.3 \pm .8$    \\ \hline
\multicolumn{1}{|l|}{ER-graph}                & $3.0 \pm 0.2$     & $0.0 \pm .2$     & $3.4 \pm .5$  & $-0.1 \pm 0.7$   \\ \hline
\multicolumn{1}{|l|}{Watts-Strogatz}          & $5.0 \pm 0.1$     & $0.0 \pm .2$     & $3.0 \pm 0.4$  & $0.0 \pm .6$  \\ \hline
\end{tabular}
\caption{Distinguishability and distinguishability deviation for two experimentally derived networks and random graph models. For random graph models, values are an average over 100 random graphs. Note, the blood cell network contained a single multi-edge which was ignored in the calculation of these values. Graphs described in the first D. Melanogaster column are undirected and graphs described in the Blood Cell column are directed.} \label{tab:results}
\end{table*}
\FloatBarrier

\end{document}